\documentclass[a4paper,11pt]{article}
\usepackage{amsmath}
\usepackage{amsfonts,amssymb}
\usepackage{eucal}
\usepackage{amsthm}
\usepackage[pdftex]{graphicx}
\usepackage{url}
\numberwithin{equation}{section}


\newcommand{\simgt}{\lower.5ex\hbox{$\; \buildrel > \over \sim \;$}}
\newcommand{\simlt}{\lower.5ex\hbox{$\; \buildrel < \over \sim \;$}}




\renewcommand{\Re}{\text{{\rm Re}}}
\renewcommand{\Im}{\text{{\rm Im}}}

\newcommand{\dist}{\text{\rm dist}}

\newcommand{\sgn}{\text{\rm sgn}}
\newcommand{\loc}{\text{\rm loc}}

\def\qed{\hfill $\Box$}

\newcommand{\tr}{\text{\rm tr}}


\newtheorem{thm}{Theorem}
\newtheorem{lem}{Lemma}
\newtheorem{prop}{Proposition}
\newtheorem{cor}{Corollary}

\theoremstyle{definition}
\newtheorem{defn}{Definition}

\newtheorem{example}{Example}

\theoremstyle{remark}
\newtheorem{rem}{Remark}

\begin{document}
\title{Perturbation of discriminant for one-dimensional  discrete Schr\"odinger operator with sparse periodic potential}
\author{Masahiro Kaminaga}
\date{}
\maketitle

\begin{abstract}
We consider the one-dimensional discrete Schr\"odinger operator with complex-valued sparse periodic potential. 
The spectrum for a complex-valued periodic potential is a complicated compact set in the complex plane represented by real intersections of algebraic curves determined by a discriminant.
We represent the discriminant by Chebyshev polynomials and use perturbations of the discriminant to study the spectrum.
\end{abstract}

\section{Introduction}
In this paper, we consider the one-dimensional discrete Schr\"odinger operator:
$$ 
(Hu)(n) = u(n+1)+u(n-1) + V(n)u(n)\quad\mbox{on $l^2(\mathbb{Z})$}, 
$$ 
where $V(n)$ is a complex-valued periodic potential with period $L$ 
containing only one nonzero value $v$ within a period. 
As the period $L$ increases, this potential becomes sparse, which we call the sparse potential. 
If $v$ is real, $H$ is a bounded self-adjoint operator. 
For real $v$, the spectrum $\sigma(H)$ of $H$ is purely absolutely continuous and consists of at most $L$ closed intervals on the real axis.
These results are proved in general dimension using the method of direct integral decomposition (see, e.g., Reed-Simon~\cite{RS4}) following Gel'fand~\cite{Gelfand}.
It describes the behavior of electrons or holes in a one-dimensional crystal (e.g., Kittel~\cite{Kittel}).
Each closed interval of the spectrum is called a {\it spectral band}.
In condensed matter physics, it is essential to determine the location of the bands, especially the band-to-band gap ({\it band gap}).
Estimating the band gap is applied to study the electrical properties of semiconductors.

Even if $v$ is not real, $H$ is still a bounded operator. 
Thus, while its spectrum $\sigma(H)$ is a compact set on the complex plane, its shape often becomes more complicated.
In the case of self-adjoint operators, a nice theory, including the spectral decomposition theorem, 
can be applied. 
On the other hand, for non-self-adjoint cases, no such nice general theory exists and 
must be analyzed on a problem-by-problem basis.
Because of this inconvenience, the spectral theory for complex-valued potential literature has been relatively few.

Non-self-adjoint Schr\"odinger-type operators (non-Hermitian Hamiltonian in physics) 
have emerged naturally in $\mathcal{PT}$-symmetric (parity-time symmetric) quantum theory (see, e.g., Bender~\cite{Bender}), 
providing a strong incentive for their study.
For differential operators with complex-valued periodic potentials, Valiev~\cite{OV} 
is an excellent guide for researchers in this field. 
In the discrete (Jacobi matrix) case, a theory similar to that of differential operators can be constructed. 
Moreover, the direct and inverse spectral theory for more general Jacobi matrices with complex periodic coefficients has also been obtained (Hochstadt~\cite{Hochstadt}, Papanicolaou~\cite{PapaJOP}).
An example of the significant difference between complex-valued periodic potentials and real-valued periodic potentials is given by Gasimov~\cite{Gasimov}(see also Guillemin and Uribe~\cite{GU}).
In continuous case, if $V(x)\in L^2_{\loc}(\mathbb{R})$ is real-valued, a famous theorem of Borg~\cite{Borg}, that is, $\sigma(-\frac{d^2}{dx^2}+V)=[0, \infty)$ if and only if $V(x)=0$ a.e.. 
In the case of complex-valued periodic potentials, 
the results are very different: Gasimov~\cite{Gasimov} showed that if 
$$
V(x) = \sum_{k=1}^{\infty}c_ke^{ikx},\quad \sum_{k=1}^{\infty}|c_k|<\infty,
$$
then $\sigma(-\frac{d^2}{dx^2}+V)=[0, \infty)$.
Papanicolaou~\cite{PapaJOP} showed a discrete version of this Gasimov's theorem. 

The discriminant (defined in Section \ref{sec:discriminant}) determines the spectrum of the one-dimensional Schr\"odinger operator with periodic potential.
The discriminant is a polynomial in $E$ with the period of the potential as its degree.
Papanicolaou~\cite{PapaJOP} proved there exists at most $L!$ different $L$-periodic potentials whose discriminants are the same. As a result, the spectra of these operators coincide.
Papanicolaou~\cite{PapaJOP} used the fact that the coefficients of the discriminant are represented by the elementary symmetric polynomials of $V(1), V(2), \ldots, V(L)$ to show this result.
This representation is suitable for inverse problems but not for the perturbation theory of discriminant because of the difficulty in obtaining information about the value of the discriminant.

The most fundamental problem in the study of operators is determining the operator's spectrum. 
As described in Section \ref{sec:discriminant}, the problem of determining the spectrum for a periodic potential with period $L$ is equivalent to the problem of finding the intersection of two algebraic curves of degree $L$. It becomes more difficult as $L$ increases; thus, a few examples have been studied in detail. The sparse potentials treated in this paper can be analyzed even for large $L$.
This paper presents a Chebyshev polynomial~(defined in Section \ref{sec:Chebyshev}) representation of the discriminant, which is then applied to the spectrum analysis.

This paper is organized as follows.  
Section 2 defines the discriminant, states that the discriminant describes the spectrum of $H$, and introduces the Floquet spectrum.
Section 3 presents the Chebyshev polynomial representation of the discriminant, which is the main theorem of this paper, after introducing Chebyshev polynomials and listing their properties. Furthermore, we apply this theorem to derive two properties concerning integrals in $[-2, 2]$ of the discriminant.
Section 4 derives the first-order Taylor polynomials of the discriminant using the main theorem.
Section 5 shows that exactly $L$ spectral bands appear for nonzero real $v$ and that the band outside $[-2, 2]$ converges to a point as $L$ is large, applying the result of Section 4.
Section 6 studies the Floquet spectrum by perturbation method for $v\in\mathbb{C}$ with small and large $|v|$.

\section{Discriminant and Spectrum of $H$}\label{sec:discriminant}
The spectrum of the one-dimensional discrete Schr\"odinger operator with 
periodic potential is represented by the {\it Hill discriminant}. 

For every $E\in\mathbb{C}$, the equation $Hu = Eu$ can be uniquely solved by giving initial values $u(0)$ and $u(1)$.
Precisely, the solution $u(n)$ can be represented as
$$
\left(\begin{array}{c}
u(n+1) \\
u(n)
\end{array}
\right) = \Phi_n(E)\left(\begin{array}{c}
u(1) \\
u(0)
\end{array}
\right), 
$$
where
$$
\Phi_n(E) = 
\left\{\begin{array}{ll}
A_n(E)\cdots A_1(E) & (n\geq 1) \\
I & (n=0) \\ 
A_{n+1}(E)^{-1}\cdots A_0(E)^{-1} & (n\leq -1)
\end{array}\right.
$$
and 

$$
A_n(E) = \left(\begin{array}{cc}
E - V(n) & - 1 \\
1 & 0
\end{array}\right).
$$
Note that $\det\Phi_n(E) = 1$ since $\det A_n(E) = 1$. 
The spectrum $\sigma(H)$ of $H$ can be characterized as
$$
\sigma(H) = \{E\in\mathbb{C}: \Delta_L(E) \in [-2, 2] \}, 
$$
where $\Delta_L(E) = \tr\Phi_L(E)$ is the Hill discriminant (``discriminant'' for short) of $H$. 
Note that $\Delta_L(E)$ is a monic polynomial. 
This result was proved by Rofe-Beketov~\cite{Rofe} for continuous Schr\"odinger operators with complex coefficients, but it can also be established for the discrete case.
We note that in the case $v=0$, i.e., free Laplacian $H_0$, $\Delta_L(E) = 2T_L(E/2)$ for the first kind Chebyshev polynomial $T_L$ of degree $L$~(described in Section \ref{sec:Chebyshev}). 
In this case, $\Delta_L(E) = 2\cos (L\cos^{-1}(E/2))$ by the definition of the Chebyshev polynomial, which shows that $\sigma(H_0) = [-2, 2]$.
We introduce the {\it Floquet spectrum} 
$$
\sigma_{\kappa}(H) = \{E\in\mathbb{C}: \Delta_L(E) = 2\cos\kappa \}
$$
for $\kappa\in [0, \pi]$, then $\sigma(H)$ can be written as the disjoint union
\begin{equation}\label{eq:union}
\sigma(H) = \bigcup_{\kappa\in [0, \pi]}\sigma_{\kappa}(H).
\end{equation}
$\Delta_L(E) = 2\cos\kappa$ is equivalent to $\Phi_L(E)$ having eigenvalues $e^{i\kappa}, e^{-i\kappa}$. 
Since the discriminant $\Delta_L(E)$ is a polynomial of degree $L$, 
$\Delta_L(E)-2\cos\kappa$ has $L$ roots $E_1(\kappa), \ldots, E_L(\kappa)$ with multiplicity(these are called $\kappa$-{\it Floquet eigenvalues}), 
each $E_j(\kappa)$ of which is continuous with respect to $\kappa$. 
Therefore, (\ref{eq:union}) leads that $\sigma(H)$ consists of $L$ closed bounded analytic arcs lying in the complex plane at most.

\begin{figure}[htbp]
 \centering
  \includegraphics[width=100mm]{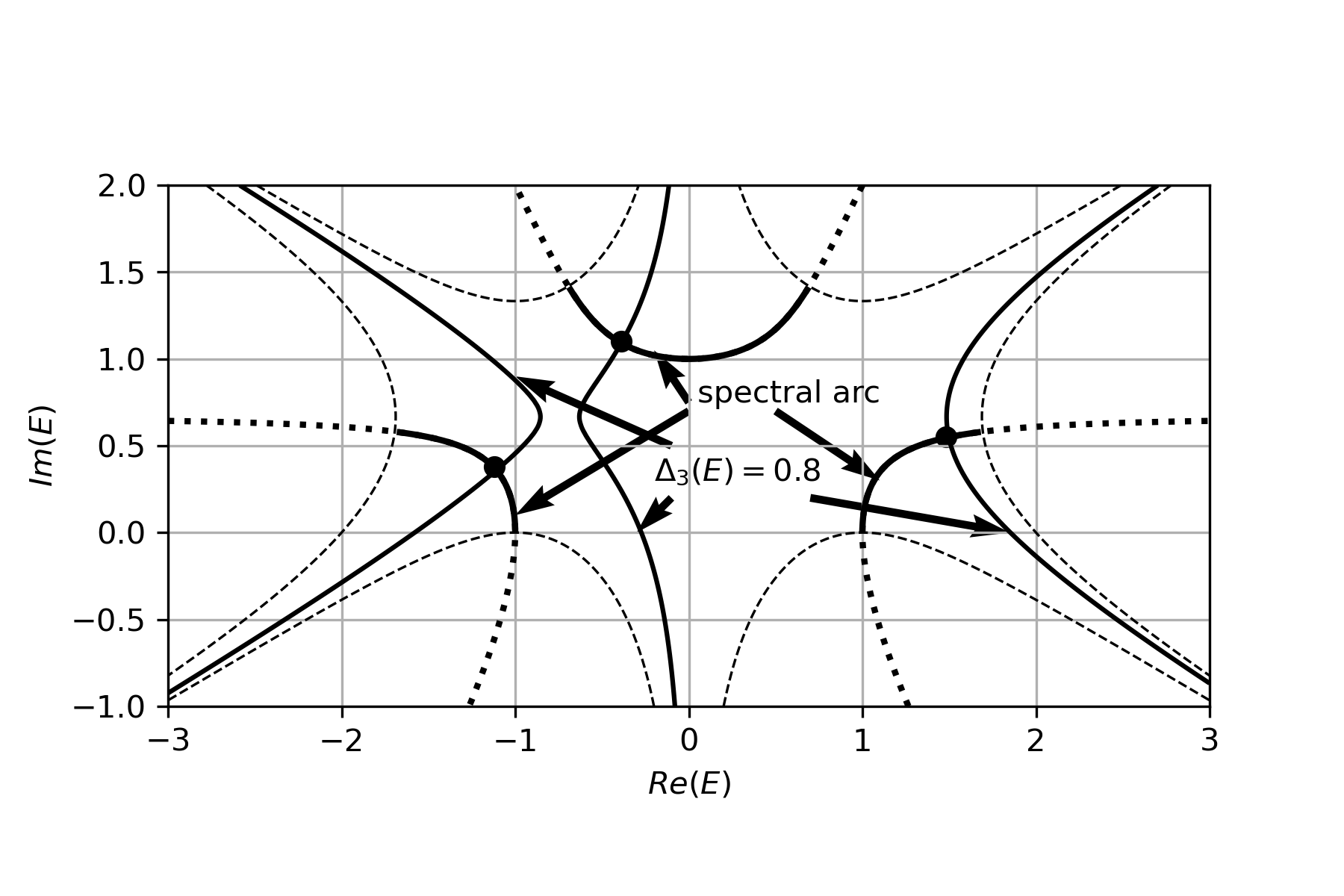}
 \caption{The spectrum for $L=3$ and $v=2i$}
 \label{fig:L3a2explain}
\end{figure}

\begin{example}\label{ex:L3}
By setting $E=x+iy~(x, y)\in\mathbb{R}^2$, 
the Floquet spectrum $\sigma_{\kappa}(H)$ can be represented as the intersection of the curves 
$\Re\Delta_L(E) = \Re\Delta_L(x+iy) = 2\cos\kappa$ and $\Im\Delta_L(E) = \Im\Delta_L(x+iy) = 0$.
These two curves are real zeros of polynomials of total degree $L$ in $x$ and $y$.
The two curves represented by the real and imaginary parts of a holomorphic function on the complex plane intersect perpendicularly. 
Therefore, $\Re\Delta_L(x+iy) = 2\cos\kappa$ and $\Im\Delta_L(x+iy)=0$ have perpendicular intersections.
For example, consider the case $L=3$ and $v=2i$. In this case,  
$\Re\Delta_3(x+iy) = x^3 - 3xy^2 - 3x + 4xy$ and 
$\Im\Delta_3(x+iy) = 3x^2y-y^3-3y - 2(x^2-y^2-1)$.
In Figure \ref{fig:L3a2explain}, the dashed line represents $\Re\Delta_3(x+iy) = \pm 2$ 
and the dotted line represents $\Im\Delta_3(x+iy) = 0$. 
The part of the dotted line that is solid is the arc of the spectrum, 
and the three curves that intersect these curves are $\Re\Delta_3(x+iy) = 0.8~(\kappa=\cos^{-1}0.4)$. 
These intersections (three filled circles in Figure \ref{fig:L3a2explain}) correspond to the Floquet spectrum $\sigma_{\cos^{-1}0.4}(H)$.
\end{example}

\begin{rem}
It is easy to verify that the spectrum is symmetric about the imaginary axis for pure imaginary $v$.
\end{rem}

\section{Representation of the discriminant using Chebyshev polynomials}\label{sec:Chebyshev}
In this section, we list several formulas for the Chebyshev polynomials (see, e.g., Mason-Handscomb~\cite{Chabyshev}) to be used later, 
and represent the discriminant  $\Delta_L(E)$ using Chebyshev polynomials.
\begin{defn}\label{defn:Chebyshev}
The Chebyshev polynomials of the first kind $T_n(x)$ are defined by
$T_{n}(\cos\theta)=\cos(n\theta)~(n\geq 0)$.
Similarly, define the Chebyshev polynomials of the second kind $U_{n}(x)$ are defined by
$U_{n}(\cos\theta)\sin\theta=\sin(n+1)\theta~(n\geq 0)$.
\end{defn}
It is easy to verify that $T_n(-x) = (-1)^nT_n(x)$ and $U_n(-x) = (-1)^nU_n(x)$.
The first few Chebyshev polynomials of the first and second kind are $T_0(x)=1$, $T_1(x)=x$, $T_2(x)=2x^2-1$, $T_3(x)=4x^3-3x$, $U_0(x)=1$, $U_1(x)=2x$, $U_2(x)=4x^2-1$, $U_3(x)=8x^3-4x$. 
It is also easily seen that 
\begin{eqnarray*}
T_n(\cosh\xi) &=& \cosh n\xi, \\ 
U_n(\cosh\xi) &=& \frac{\sinh(n+1)\xi}{\sinh\xi}.
\end{eqnarray*}
\begin{prop}\label{prop:identity}
The Chebyshev polynomials satisfy the following recursive relations:
\begin{eqnarray*}
T_{n+1}(x) &=& 2xT_n(x) - T_{n-1}(x) ~(n\geq 1), \\
U_{n+1}(x) &=& 2xU_n(x) - U_{n-1}(x) ~(n\geq 1),  \\
T_n(x) &=& \frac{1}{2}(U_n(x)-U_{n-2}(x)) ~(n\geq 2).
\end{eqnarray*}
\end{prop}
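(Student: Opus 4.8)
The plan is to reduce all three identities to elementary trigonometric addition formulas via the substitution $x=\cos\theta$. Since the set $\{\cos\theta : \theta\in(0,\pi)\}=(-1,1)$ is infinite, any polynomial identity that holds on it holds identically; so it suffices to verify each relation after replacing $x$ by $\cos\theta$ with $\theta\in(0,\pi)$ and then invoke the identity theorem for polynomials.

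For the first relation I would start from $T_m(\cos\theta)=\cos(m\theta)$ and apply the sum-to-product formula $\cos((n+1)\theta)+\cos((n-1)\theta)=2\cos\theta\cos(n\theta)$, which gives $T_{n+1}(\cos\theta)+T_{n-1}(\cos\theta)=2\cos\theta\,T_n(\cos\theta)$ — precisely the claim at $x=\cos\theta$. For the second relation I would use $U_m(\cos\theta)\sin\theta=\sin((m+1)\theta)$ together with $\sin((n+2)\theta)+\sin(n\theta)=2\cos\theta\sin((n+1)\theta)$ to obtain $\bigl(U_{n+1}(\cos\theta)+U_{n-1}(\cos\theta)\bigr)\sin\theta=2\cos\theta\,U_n(\cos\theta)\sin\theta$, and then cancel the factor $\sin\theta$, which is nonzero for $\theta\in(0,\pi)$. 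For the third relation I would similarly write $\bigl(U_n(\cos\theta)-U_{n-2}(\cos\theta)\bigr)\sin\theta=\sin((n+1)\theta)-\sin((n-1)\theta)=2\cos(n\theta)\sin\theta=2T_n(\cos\theta)\sin\theta$, using $\sin A-\sin B=2\cos\frac{A+B}{2}\sin\frac{A-B}{2}$, and then cancel $2\sin\theta$.

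There is no genuine obstacle here: the proposition is a routine consequence of the defining trigonometric relations of Definition~\ref{defn:Chebyshev} and the fact that two polynomials agreeing on infinitely many points coincide. The only point requiring a word of care is the cancellation of $\sin\theta$ in the last two identities, which is legitimate because $\sin\theta\ne 0$ on $(0,\pi)$ and the resulting polynomial equality still holds on the infinite set $(-1,1)$. (Alternatively, the two three-term recursions can be proved by induction on $n$ from the degree and leading-coefficient data, but the trigonometric route is the shortest.)
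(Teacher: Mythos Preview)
Your argument is correct and is the standard derivation from the trigonometric definitions in Definition~\ref{defn:Chebyshev}; the cancellation of $\sin\theta$ on $(0,\pi)$ and the appeal to the identity theorem for polynomials are handled cleanly. Note that the paper itself does not supply a proof of this proposition: it is listed among several well-known Chebyshev identities with a reference to Mason--Handscomb~\cite{Chabyshev}, so there is no ``paper's proof'' to compare against, and your write-up would serve perfectly well as a self-contained justification.
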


\begin{prop}\label{prop:identity-diff}
The following formulas hold for the derivative of Chebyshev polynomials.
\begin{eqnarray*}
T'_n(x) &=& nU_{n-1}(x) ~(n\geq 1), \\
U'_n(x) &=& \frac{(n+1)T_{n+1}(x) - xU_n(x)}{x^2-1} ~(n\geq 0)
\end{eqnarray*}
\end{prop}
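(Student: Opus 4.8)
The plan is to prove both identities by differentiating the trigonometric definitions of $T_n$ and $U_n$ from Definition~\ref{defn:Chebyshev} with respect to the angle variable, and then to promote the resulting relations to genuine polynomial identities via the fact that two polynomials agreeing on an infinite set coincide.

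First I would handle $T'_n(x) = nU_{n-1}(x)$. Setting $x = \cos\theta$ and differentiating $T_n(\cos\theta) = \cos(n\theta)$ in $\theta$ by the chain rule gives $-\sin\theta\,T'_n(\cos\theta) = -n\sin(n\theta)$. Since $U_{n-1}(\cos\theta)\sin\theta = \sin(n\theta)$ by definition, this rearranges to $\sin\theta\,T'_n(\cos\theta) = n\sin\theta\,U_{n-1}(\cos\theta)$. For $\theta\in(0,\pi)$ we cancel $\sin\theta\neq 0$ to get $T'_n(\cos\theta) = nU_{n-1}(\cos\theta)$ on all of $(0,\pi)$; as $\cos\theta$ then ranges over the infinite set $(-1,1)$ and both sides are polynomials, the identity $T'_n = nU_{n-1}$ holds identically.

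Next I would treat the formula for $U'_n$. Again put $x=\cos\theta$ and differentiate $U_n(\cos\theta)\sin\theta = \sin((n+1)\theta)$ in $\theta$: the left side yields $-\sin^2\theta\,U'_n(\cos\theta) + \cos\theta\,U_n(\cos\theta)$, and the right side yields $(n+1)\cos((n+1)\theta) = (n+1)T_{n+1}(\cos\theta)$. Replacing $\sin^2\theta$ by $1-\cos^2\theta$ and writing $x=\cos\theta$ produces $(x^2-1)U'_n(x) = (n+1)T_{n+1}(x) - xU_n(x)$, first on $(-1,1)$ and hence, by the same infinite-set argument, as a polynomial identity; dividing by $x^2-1$ gives the stated formula for $x\neq\pm 1$.

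The only subtlety — and it is a mild one — is the transition from a relation valid only where $\sin\theta\neq 0$ (equivalently on $(-1,1)$) to a true polynomial identity, and in particular making sense of the second formula near $x=\pm1$; this is resolved by reading it in the cleared-denominator form $(x^2-1)U'_n = (n+1)T_{n+1} - xU_n$. An alternative, purely algebraic route is induction on $n$ using the three-term recurrences of Proposition~\ref{prop:identity} together with the already-established first identity, but the differentiation argument is shorter and I would prefer it.
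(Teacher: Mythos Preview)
Your argument is correct: differentiating the defining relations $T_n(\cos\theta)=\cos n\theta$ and $U_n(\cos\theta)\sin\theta=\sin(n+1)\theta$ with respect to $\theta$, cancelling the nonvanishing factor $\sin\theta$ on $(0,\pi)$, and then extending to a polynomial identity by the infinite-agreement principle is exactly the standard derivation, and you handle the $x=\pm 1$ issue for the second formula appropriately by passing to the cleared-denominator form $(x^2-1)U'_n=(n+1)T_{n+1}-xU_n$.

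There is nothing to compare against in the paper itself: Proposition~\ref{prop:identity-diff} is stated there without proof, as one of several standard Chebyshev identities quoted from Mason--Handscomb~\cite{Chabyshev}. Your write-up supplies precisely the routine verification one would expect.
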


\begin{prop}\label{prop:orthogonality}
Both $\{T_n(x)\}_{n=0}^{\infty}$ and $\{U_n(x)\}_{n=0}^{\infty}$ form a sequence of orthogonal polynomials in the following sense:
\begin{eqnarray*}
\frac{2}{\pi}\int_{-1}^1T_n(x)T_m(x)\frac{1}{\sqrt{1-x^2}}dx &=& \delta_{nm} \\
\frac{2}{\pi}\int_{-1}^1U_n(x)U_m(x)\sqrt{1-x^2}dx &=& \delta_{nm}, 
\end{eqnarray*}
where $\delta_{nm}$ is the Kronecker Delta. 
\end{prop}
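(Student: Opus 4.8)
The plan is to reduce both identities to elementary trigonometric integrals by the substitution $x=\cos\theta$ with $\theta\in[0,\pi]$, which is exactly the substitution under which $T_n$ and $U_n$ are defined in Definition~\ref{defn:Chebyshev}. On $[0,\pi]$ one has $\sqrt{1-x^2}=\sin\theta\geq 0$ and $dx=-\sin\theta\,d\theta$, and as $x$ runs from $-1$ to $1$ the angle $\theta$ runs from $\pi$ to $0$; hence for any continuous $f$ one has $\int_{-1}^{1}f(x)\,dx=\int_{0}^{\pi}f(\cos\theta)\sin\theta\,d\theta$.

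First I would treat the first-kind relation. Since $T_n(\cos\theta)=\cos n\theta$ and the weight $1/\sqrt{1-x^2}$ cancels the Jacobian factor $\sin\theta$ exactly, the left-hand side becomes $\tfrac{2}{\pi}\int_0^{\pi}\cos n\theta\,\cos m\theta\,d\theta$. Expanding by $\cos n\theta\cos m\theta=\tfrac12\bigl(\cos(n+m)\theta+\cos(n-m)\theta\bigr)$ and using $\int_0^{\pi}\cos k\theta\,d\theta=0$ for every nonzero integer $k$, every term vanishes unless $n=m$, in which case the $\cos(n-m)\theta$ term contributes $\pi$ and the whole expression equals $1$ (for $n=m\geq 1$). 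For the second-kind relation I would proceed in the same way: writing $U_n(\cos\theta)=\sin((n+1)\theta)/\sin\theta$, the weight $\sqrt{1-x^2}$ together with the Jacobian gives a factor $\sin^2\theta$ that cancels the two denominators, so the left-hand side becomes $\tfrac{2}{\pi}\int_0^{\pi}\sin((n+1)\theta)\,\sin((m+1)\theta)\,d\theta$; now $\sin a\theta\sin b\theta=\tfrac12\bigl(\cos(a-b)\theta-\cos(a+b)\theta\bigr)$ together with the same vanishing observation (note that $n+m+2\geq 2$ is never zero) yields $\delta_{nm}$ for all $n,m\geq 0$.

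There is essentially no obstacle here — the argument is a routine computation — so the only thing to watch is the degenerate case $n=m=0$ of the first-kind identity, where $\cos(n+m)\theta\equiv 1$ does not integrate to zero and the expression equals $2$ rather than $1$; the stated normalization for $T_n$ is thus to be understood with the usual convention that the index-$0$ case is exceptional (or with $n\geq 1$), as in Mason--Handscomb~\cite{Chabyshev}. If one preferred to avoid trigonometry altogether, an alternative would be to deduce orthogonality directly from the three-term recurrences of Proposition~\ref{prop:identity} by a telescoping/self-adjointness argument for the associated Jacobi difference operator, but the substitution above is by far the most transparent route and is the one I would write up.
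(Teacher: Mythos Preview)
Your argument is correct and is the standard derivation: the substitution $x=\cos\theta$ reduces both weighted integrals to the orthogonality of $\{\cos n\theta\}$ and $\{\sin(n+1)\theta\}$ on $[0,\pi]$, and you have tracked the Jacobian and weight factors accurately. Your observation about the degenerate case $n=m=0$ of the first-kind identity is also well taken; the stated normalization is indeed off by a factor of $2$ there, and the usual convention is to single out that case.

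As for comparison with the paper: the paper does not actually give a proof of this proposition. It is listed among several standard Chebyshev identities quoted from Mason--Handscomb~\cite{Chabyshev} and used later (notably in the proof of Theorem~\ref{thm:Parseval}) without further justification. Your write-up would therefore supply what the paper omits, and the trigonometric-substitution route you chose is exactly the one found in that reference.
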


\begin{prop}\label{prop:Chebyshevzeros}
\begin{eqnarray*}
T_n(x) &=& 2^{n-1}\prod_{k=1}^n\left(x-\cos\frac{(2k-1)\pi}{2n}\right), \\
U_n(x) &=& 2^{n}\prod_{k=1}^n\left(x-\cos\frac{k\pi}{n+1}\right)
\end{eqnarray*}
\end{prop}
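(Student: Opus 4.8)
The plan is to read the zeros of each polynomial directly off its trigonometric definition and to pin down the overall constant from the leading coefficient, then invoke the factor theorem.

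First I would compute the leading coefficients. Starting from $T_0(x)=1$, $T_1(x)=x$ and the recurrence $T_{n+1}(x)=2xT_n(x)-T_{n-1}(x)$ of Proposition \ref{prop:identity}, an immediate induction shows that $T_n$ has degree $n$ with leading coefficient $2^{n-1}$ for $n\geq 1$. Likewise, from $U_0(x)=1$, $U_1(x)=2x$ and $U_{n+1}(x)=2xU_n(x)-U_{n-1}(x)$, the polynomial $U_n$ has degree $n$ with leading coefficient $2^n$.

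Next I would locate the roots. For $T_n$, putting $x=\cos\theta$ with $\theta\in[0,\pi]$ and using $T_n(\cos\theta)=\cos(n\theta)$, we get $T_n(x)=0$ exactly when $n\theta$ is an odd multiple of $\pi/2$, that is $\theta=\frac{(2k-1)\pi}{2n}$ for $k=1,\ldots,n$; these $n$ angles are distinct and lie in $(0,\pi)$, and since $\cos$ is injective there, the numbers $\cos\frac{(2k-1)\pi}{2n}$ are $n$ distinct real roots of $T_n$. For $U_n$, from $U_n(\cos\theta)\sin\theta=\sin((n+1)\theta)$, at $\theta=\frac{k\pi}{n+1}$ with $k=1,\ldots,n$ the right-hand side vanishes while $\sin\theta\neq 0$ (because $0<\frac{k\pi}{n+1}<\pi$), so $U_n(\cos\theta)=0$; these give $n$ distinct roots $\cos\frac{k\pi}{n+1}$.

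Finally, a polynomial of degree $n$ with leading coefficient $c$ and $n$ distinct roots $r_1,\ldots,r_n$ equals $c\prod_{k=1}^n(x-r_k)$; applying this with $c=2^{n-1}$, respectively $c=2^n$, yields the two claimed factorizations. I do not anticipate a genuine obstacle here: the only point needing a line of care is verifying that the listed angles are pairwise distinct and strictly interior to $(0,\pi)$, so that their cosines are distinct and, in the $U_n$ case, the factor $\sin\theta$ is nonzero — both of which follow at once from $1\leq k\leq n$. Alternatively, one could bypass the leading-coefficient computation by comparing the values of both sides at a convenient point such as $x=1$.
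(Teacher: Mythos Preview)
Your proof is correct and complete. The paper itself does not prove this proposition: it is stated without proof in Section~\ref{sec:Chebyshev} as one of several standard identities for Chebyshev polynomials, with a reference to Mason--Handscomb~\cite{Chabyshev}. Your argument (read off the roots from the defining trigonometric relations, determine the leading coefficient from the recurrences of Proposition~\ref{prop:identity}, then apply the factor theorem) is the standard one and exactly what one would expect to find in such a reference.
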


\begin{lem}\label{lem:poermatrix}
Let $\Phi_0(E)$ be the matrix  
$$
\Phi_0(E) =\left(\begin{array}{cc}
E & -1 \\
1 & 0
\end{array}
\right),
$$
then, $\tr(\Phi_0(E)^n) = 2T_n(E/2)$. 
\end{lem}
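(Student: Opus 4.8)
The plan is to diagonalize $\Phi_0(E)$ (or, when it is not diagonalizable, put it in Jordan form) and compute the trace of its $n$-th power directly. The characteristic polynomial of $\Phi_0(E)$ is $\lambda^2 - E\lambda + 1$, whose roots $\lambda_\pm$ satisfy $\lambda_+ + \lambda_- = E$ and $\lambda_+\lambda_- = 1$. Writing $E = 2\cos\theta$ (allowing $\theta\in\mathbb{C}$, so this covers all $E\in\mathbb{C}$), we get $\lambda_\pm = e^{\pm i\theta}$, hence the eigenvalues of $\Phi_0(E)^n$ are $e^{\pm in\theta}$ and
$$
\tr(\Phi_0(E)^n) = e^{in\theta} + e^{-in\theta} = 2\cos(n\theta) = 2T_n(\cos\theta) = 2T_n(E/2),
$$
using Definition \ref{defn:Chebyshev}. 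This is the whole computation when $\lambda_+\neq\lambda_-$, i.e. when $E\neq\pm 2$.

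For the two exceptional values $E=\pm 2$ the matrix $\Phi_0(E)$ has a repeated eigenvalue $\lambda = E/2 = \pm 1$ and is a single Jordan block, so $\Phi_0(E)^n$ has both diagonal entries equal to $\lambda^n$ and therefore $\tr(\Phi_0(E)^n) = 2\lambda^n = 2(\pm1)^n = 2T_n(\pm1)$, which agrees with $2T_n(E/2)$. Alternatively one avoids case distinctions entirely by noting that both sides, $\tr(\Phi_0(E)^n)$ and $2T_n(E/2)$, are polynomials in $E$ that agree on the infinite set $\{2\cos\theta : \theta\in(0,\pi)\}\subset(-2,2)$, hence are identically equal; this continuity/density argument is the cleanest route and is how I would actually write it.

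An even more self-contained alternative, if one prefers to stay within the formula list already established, is induction on $n$ using the recursion for $T_n$ from Proposition \ref{prop:identity}. Let $t_n = \tr(\Phi_0(E)^n)$. One checks $t_0 = 2 = 2T_0(E/2)$ and $t_1 = \tr\Phi_0(E) = E = 2T_1(E/2)$, and then uses the Cayley--Hamilton relation $\Phi_0(E)^2 = E\,\Phi_0(E) - I$ to get $\Phi_0(E)^{n+1} = E\,\Phi_0(E)^n - \Phi_0(E)^{n-1}$; taking traces gives $t_{n+1} = E\,t_n - t_{n-1} = 2\cdot\frac{E}{2}\,t_n - t_{n-1}$, which is exactly the three-term recursion satisfied by $2T_n(E/2)$. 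Since the initial conditions match, $t_n = 2T_n(E/2)$ for all $n\geq 0$.

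There is no real obstacle here; the only subtlety worth a sentence in the write-up is the degenerate case $E=\pm2$, which either the density argument or the Cayley--Hamilton induction handles automatically, so I would present the eigenvalue computation for intuition and then invoke the polynomial identity (or the induction) to cover all $E\in\mathbb{C}$ rigorously.
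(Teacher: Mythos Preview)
Your proposal is correct. Your primary line---writing $E=2\cos\theta$, reading off the eigenvalues $e^{\pm i\theta}$, and treating $E=\pm2$ separately---is the same strategy the paper uses, though the paper executes it less directly: it first derives an explicit expression for $\Phi_0(E)^n$ as a linear combination of $\Phi_0(E)$ and $I$ (its equation~(\ref{eq:power})), takes the trace to obtain $EU_{n-1}(E/2)-2U_{n-2}(E/2)$, and then reduces this to $2T_n(E/2)$ via the Chebyshev recursions in Proposition~\ref{prop:identity}. That detour is not wasted, since the matrix formula~(\ref{eq:power}) is reused in the proof of Theorem~\ref{thm:represent}; but for the lemma in isolation your ``trace equals sum of eigenvalues'' is cleaner. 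Your two alternatives---the polynomial-identity argument and the Cayley--Hamilton induction---are genuinely different from the paper's route and have the advantage of avoiding the $E=\pm2$ case analysis altogether.
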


\begin{proof}
We first consider the case of $\Phi_0(E)$ has eigenvalues $\lambda, \lambda^{-1}$ satisfying $\lambda^2\ne 1$. 
By direct computation, we learn
\begin{equation}\label{eq:power}
\Phi_0(E)^n = \frac{\lambda^n-\lambda^{-n}}{\lambda-\lambda^{-1}}\Phi_0(E) - \frac{\lambda^{n-1}-\lambda^{-(n-1)}}{\lambda-\lambda^{-1}}I.
\end{equation}
Let $\lambda=e^{i\theta}$, i.e., $E=2\cos\theta$~($\theta$ need not be a real number). 
Taking trace of both sides of (\ref{eq:power}), combining the second and the third identities in Proposition \ref{prop:identity} yields
\begin{eqnarray*}
\tr(\Phi_0(E)^n) &=& \frac{e^{in\theta}-e^{-in\theta}}{e^{i\theta}-e^{-i\theta}}E - 2\frac{e^{i(n-1)\theta}-e^{-i(n-1)\theta}}{e^{i\theta}-e^{-i\theta}} \\
&=& EU_{n-1}(E/2) - 2U_{n-2}(E/2) \\
&=& U_n(E/2) - U_{n-2}(E/2) = 2T_n(E/2).
\end{eqnarray*}
We consider the case $\lambda^2=1$, i.e., $E=\pm 2$.  
In this case, $\lambda$ is a double root of the eigenpolynomial of $\Phi_0(E)$, then we have
\begin{equation}\label{eq:power2}
\Phi_0(E)^n = n\lambda^{n-1}\Phi_0(E) - (n-1)\lambda^nI.
\end{equation}
Taking trace of both sides of (\ref{eq:power2}), we have
$$
\tr(\Phi_0(E)^n) = n(\pm 1)^{n-1}(\pm 2) - 2(n-1)(\pm 1)^n.
$$
Here is the double sign in the same order.
Then, $\tr(\Phi_0(E)^n)=2$ for $E=2$, and $=2(-1)^n$ for $E=-2$, these are consistent with $2T_n(E/2)$.
\end{proof}

\noindent
Then, our main theorem is stated as follows:
\begin{thm}\label{thm:represent}
The discriminant $\Delta_L(E)$ can be represented as
\begin{equation}\label{eq:mainrepresentation}
\Delta_L(E) = 2T_L(E/2) - vU_{L-1}(E/2)
\end{equation}
for $L\geq 1$.
\end{thm}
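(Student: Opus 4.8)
The plan is to reduce everything to a single matrix power of $\Phi_0(E)$ plus a rank-one correction, after which Lemma~\ref{lem:poermatrix} and the closed form of $\Phi_0(E)^n$ obtained inside its proof do all of the work.

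First, since $\Delta_L(E)=\tr\Phi_L(E)=\tr\bigl(A_L(E)\cdots A_1(E)\bigr)$ and the trace is invariant under cyclic permutation of the factors, I may assume without loss of generality that the unique nonzero value of $V$ within a period sits at $n=1$, i.e.\ $V(1)=v$ and $V(n)=0$ for $2\le n\le L$. Then $A_n(E)=\Phi_0(E)$ for $2\le n\le L$, while $A_1(E)=\Phi_0(E)-vP$ with $P=\begin{pmatrix}1&0\\0&0\end{pmatrix}$. Consequently
$$
\Phi_L(E)=\Phi_0(E)^{L-1}A_1(E)=\Phi_0(E)^L-v\,\Phi_0(E)^{L-1}P,
$$
and, using $\tr(MP)=M_{11}$, taking traces gives $\Delta_L(E)=\tr\bigl(\Phi_0(E)^L\bigr)-v\,\bigl(\Phi_0(E)^{L-1}\bigr)_{11}$.

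By Lemma~\ref{lem:poermatrix} the first term equals $2T_L(E/2)$, so it remains to identify $\bigl(\Phi_0(E)^{L-1}\bigr)_{11}$ with $U_{L-1}(E/2)$. For $E\ne\pm2$ I rewrite \eqref{eq:power} with $\lambda=e^{i\theta}$ and $E=2\cos\theta$: since $\tfrac{\sin n\theta}{\sin\theta}=U_{n-1}(E/2)$ and $\tfrac{\sin(n-1)\theta}{\sin\theta}=U_{n-2}(E/2)$, it becomes $\Phi_0(E)^n=U_{n-1}(E/2)\,\Phi_0(E)-U_{n-2}(E/2)\,I$. Reading off the $(1,1)$ entry (recall $\Phi_0(E)_{11}=E$) gives $\bigl(\Phi_0(E)^n\bigr)_{11}=E\,U_{n-1}(E/2)-U_{n-2}(E/2)$, and setting $n=L-1$ and invoking the recursion $U_{L-1}(x)=2xU_{L-2}(x)-U_{L-3}(x)$ of Proposition~\ref{prop:identity} with $x=E/2$ shows $\bigl(\Phi_0(E)^{L-1}\bigr)_{11}=U_{L-1}(E/2)$. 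Both sides of \eqref{eq:mainrepresentation} are then polynomials in $E$ that agree off the two points $E=\pm2$, hence agree identically.

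I expect the only fussy points to be clerical: keeping the Chebyshev indices aligned in the rank-one term, and the small cases $L=1,2$, where $\Phi_0(E)^{L-1}$ is $I$ or $\Phi_0(E)$ and negative-index $U$'s appear formally --- these stay consistent once one adopts $U_{-1}\equiv0$ (and $U_{-2}\equiv-1$), which is immediate from $U_n(\cos\theta)\sin\theta=\sin(n+1)\theta$. There is no genuine analytic obstacle; the entire content is the reduction $\Phi_L=\Phi_0^{L-1}A_1=\Phi_0^L-v\Phi_0^{L-1}P$ combined with the explicit form of $\Phi_0^n$ already in hand.
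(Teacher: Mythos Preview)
Your proof is correct and follows essentially the same route as the paper: the reduction $\Phi_L=\Phi_0^{L-1}A_1=\Phi_0^L-v\Phi_0^{L-1}P$, Lemma~\ref{lem:poermatrix} for the first term, and the identification $\bigl(\Phi_0^{L-1}\bigr)_{11}=EU_{L-2}(E/2)-U_{L-3}(E/2)=U_{L-1}(E/2)$ via \eqref{eq:power} and the recursion in Proposition~\ref{prop:identity}. The only cosmetic differences are that the paper treats $L=1,2$ by direct computation rather than via the conventions $U_{-1}=0$, $U_{-2}=-1$, and does not spell out the polynomial-continuation argument for $E=\pm2$.
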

\begin{proof}
From the cyclic invariance of the matrix trace, the discriminant $\Delta_L(E)$ 
equals the trace of the matrix given by
\begin{equation}\label{eq:basic}
\Phi_0(E)^{L-1}
\left(\begin{array}{cc}
E - v& -1 \\
1 & 0
\end{array}
\right)=\Phi_0(E)^L
-v\Phi_0(E)^{L-1}
\left(\begin{array}{cc}
1& 0 \\
0 & 0
\end{array}
\right).
\end{equation}

\noindent
For $L=1, 2$, by direct computation, we have
\begin{eqnarray*}
\Delta_1(E) &=& E - v = 2T_1(E/2) - vU_0(E/2), \\
\Delta_2(E) &=& E^2 - 2 - vE = 2T_2(E/2) - vU_1(E/2).
\end{eqnarray*}
Next, assume $L\geq 3$, 
using (\ref{eq:power}), we have
$$
\Phi_0(E)^{L-1}
\left(\begin{array}{cc}
1& 0 \\
0 & 0
\end{array}
\right) = 
\frac{\lambda^{L-1}-\lambda^{-(L-1)}}{\lambda-\lambda^{-1}}\left(\begin{array}{cc}
E & 0 \\
0 & 0
\end{array}
\right) - \frac{\lambda^{L-2}-\lambda^{-(L-2)}}{\lambda-\lambda^{-1}}\left(\begin{array}{cc}
1& 0 \\
0 & 0
\end{array}
\right).
$$
Taking trace in both sides shows that 
\begin{eqnarray}
&& \tr\left\{\Phi_0(E)^{L-1}
\left(\begin{array}{cc}
1& 0 \\
0 & 0
\end{array}
\right)\right\} \nonumber \\
&=& \frac{\lambda^{L-1}-\lambda^{-(L-1)}}{\lambda-\lambda^{-1}}E-\frac{\lambda^{L-2}-\lambda^{-(L-2)}}{\lambda-\lambda^{-1}} \nonumber \\
&=& EU_{L-2}(E/2) - U_{L-3}(E/2) = U_{L-1}(E/2). \label{eq:U}
\end{eqnarray}
From Lemma \ref{lem:poermatrix}, we learn that $\tr(\Phi_0(E)^L) = 2T_L(E/2)$. 
Substituting $\tr(\Phi_0(E)^L) = 2T_L(E/2)$ and (\ref{eq:U}) into (\ref{eq:basic}),  
we have
\begin{eqnarray*}
\Delta_L(E) &=& \tr(\Phi_L(E))-v\tr\left\{\Phi_0(E)^{L-1}
\left(\begin{array}{cc}
1& 0 \\
0 & 0
\end{array}
\right)\right\} \\
&=& 2T_L(E/2) - vU_{L-1}(E/2).
\end{eqnarray*}
\end{proof}
\begin{cor}\label{cor:zeros}
Let  $\beta_{k}$ be the $k$-th root of $U_{L-1}(E/2)$ is given by 
$$
\beta_{k} = 2\cos\frac{k\pi}{L}\quad (k=1, 2, \ldots, L-1).
$$
Then, $\{\beta_1, \beta_2, \ldots, \beta_{L-1}\}\subset\sigma(H)$. Moreover, 
$2\in\sigma(H)$ if and only if $v\in [0, 4/L]$, also, $-2\in\sigma(H)$ if and only if $v\in[-4/L, 0]$. 
In particular, $\pm 2\not\in\sigma(H)$ if $v\not\in\mathbb{R}$. 
\end{cor}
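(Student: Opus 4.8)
The plan is to read everything off from the representation
$\Delta_L(E) = 2T_L(E/2) - vU_{L-1}(E/2)$ of Theorem~\ref{thm:represent}, combined with the
explicit zeros of $U_{L-1}$ in Proposition~\ref{prop:Chebyshevzeros} and the defining identities
$T_n(\cos\theta)=\cos n\theta$, $U_n(\cos\theta)\sin\theta=\sin(n+1)\theta$ of
Definition~\ref{defn:Chebyshev}. The whole corollary is then a matter of evaluating $\Delta_L$ at
three families of special points.

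First I would evaluate $\Delta_L$ at $\beta_k = 2\cos(k\pi/L)$, $k=1,\ldots,L-1$. By
Proposition~\ref{prop:Chebyshevzeros} these are precisely the roots of $U_{L-1}(E/2)$, so the
$v$-dependent term vanishes and $\Delta_L(\beta_k) = 2T_L(\cos(k\pi/L)) = 2\cos(k\pi) = 2(-1)^k$.
Since $2(-1)^k\in\{-2,2\}\subset[-2,2]$, the characterization
$\sigma(H)=\{E\in\mathbb{C}:\Delta_L(E)\in[-2,2]\}$ from Section~\ref{sec:discriminant} yields
$\beta_k\in\sigma(H)$ (indeed $\beta_k$ is then a $0$- or $\pi$-Floquet eigenvalue in the sense of
that section).

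Next I would evaluate at $E=\pm2$, i.e.\ $E/2=\pm1$. Here $T_L(1)=1$ and $T_L(-1)=(-1)^L$, while
$U_{L-1}(\pm1)$ is not a generic value and must be obtained from the definition via
$U_{L-1}(1)=\lim_{\theta\to0}\sin(L\theta)/\sin\theta=L$ together with the parity
$U_{L-1}(-x)=(-1)^{L-1}U_{L-1}(x)$, giving $U_{L-1}(-1)=(-1)^{L-1}L$. Hence $\Delta_L(2)=2-vL$ and
$\Delta_L(-2)=(-1)^L(2+vL)$. Then $2\in\sigma(H)$ iff $2-vL\in[-2,2]$; since $L\ge1$ this forces
$vL$, hence $v$, to be real, and is then equivalent to $vL\in[0,4]$, i.e.\ $v\in[0,4/L]$. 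Likewise,
multiplication by $(-1)^L$ preserves $[-2,2]$, so $-2\in\sigma(H)$ iff $2+vL\in[-2,2]$, i.e.\
$v\in[-4/L,0]$. The final assertion is immediate because $[0,4/L]\cup[-4/L,0]\subset\mathbb{R}$, so
$v\notin\mathbb{R}$ excludes both $\pm2$ from $\sigma(H)$.

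I do not expect a substantive obstacle: the corollary is a direct computation from the main
theorem. The only points requiring a little care are the non-generic evaluations
$U_{L-1}(\pm1)=(\pm1)^{L-1}L$ (computed by the limit above rather than by naive substitution) and
the degenerate case $L=1$, where the claim about the $\beta_k$ is vacuous while $\Delta_1(E)=E-v$
reproduces $\Delta_1(\pm2)=\pm2-v$, consistent with $4/L=4$.
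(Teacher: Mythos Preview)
Your proposal is correct and follows essentially the same approach as the paper: evaluate $\Delta_L$ at $\beta_k$ and at $\pm 2$ using the representation \eqref{eq:mainrepresentation}, then invoke the characterization $\sigma(H)=\{E:\Delta_L(E)\in[-2,2]\}$. You supply more detail than the paper does (the limit computation for $U_{L-1}(\pm1)$, the parity argument, the observation that $[-2,2]$ forces $v$ real, and the $L=1$ check), but the underlying argument is identical.
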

\begin{proof}
Substituting $\beta_{k}$ into (\ref{eq:mainrepresentation}) yields 
$\Delta_L(\beta_{k}) = 2\cdot (-1)^{k}\in [-2, 2]$. 
Furthermore, substituting $E = \pm 2$ into (\ref{eq:mainrepresentation}) yields
$\Delta_L(2) = 2 - Lv$ and $\Delta_L(-2) = (-1)^L(2 + Lv)$.
$2\in\sigma(H)$ if and only if $\Delta_L(2) = 2 - Lv\in[-2, 2]$, which means that $v\in [0, 4/L]$. 
Similarly, it can be shown that $-2\in\sigma(H)$ and that $v\in[-4/L, 0]$ is equivalent.
\end{proof}

Corollary \ref{cor:zeros} shows that the roots $\beta_1, \beta_2, \ldots, \beta_{L-1}$ of $\Delta_L(E)$ 
that corresponding to one of the endpoints of each spectral arc is concentrated at endpoints $E = \pm 2$ 
of the spectrum of the free Laplacian as $L$ is large, 
and their density is approximately $\displaystyle\frac{1}{\pi\sqrt{4-E^2}}$ on the real axis.

\begin{cor}\label{rem:osc}
\begin{equation}\label{eq: integral}
\int_{-2}^2\Delta_L(E)dE = \left\{\begin{array}{cl}
-\frac{4}{L^2-1} & \mbox{$L$: even} \\
-\frac{2v}{L} & \mbox{$L$: odd}.
\end{array}
\right.
\end{equation}
\end{cor}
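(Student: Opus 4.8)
The plan is to reduce the integral to standard Chebyshev integrals via the main representation and then evaluate those by elementary means. First I would insert the representation $\Delta_L(E) = 2T_L(E/2) - vU_{L-1}(E/2)$ from Theorem~\ref{thm:represent} and change variables $x = E/2$ (so $dE = 2\,dx$ and $x$ runs over $[-1,1]$), which gives
\[
\int_{-2}^{2}\Delta_L(E)\,dE = 4\int_{-1}^{1}T_L(x)\,dx - 2v\int_{-1}^{1}U_{L-1}(x)\,dx .
\]
The two cases in (\ref{eq: integral}) then come out for a transparent reason: by the parity relations $T_L(-x) = (-1)^{L}T_L(x)$ and $U_{L-1}(-x) = (-1)^{L-1}U_{L-1}(x)$ noted after Definition~\ref{defn:Chebyshev}, the first integral vanishes when $L$ is odd and the second vanishes when $L$ is even, so exactly one term survives in each parity class; in particular the answer is $v$-free precisely when $L$ is even.

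It then remains to evaluate the surviving integral. For this I would use Proposition~\ref{prop:identity-diff}: since $T_{n+1}'(x) = (n+1)U_n(x)$, the function $T_{n+1}(x)/(n+1)$ is an antiderivative of $U_n(x)$, so
\[
\int_{-1}^{1}U_n(x)\,dx = \frac{T_{n+1}(1) - T_{n+1}(-1)}{n+1} = \frac{1 - (-1)^{n+1}}{n+1},
\]
using $T_m(1) = 1$ and $T_m(-1) = (-1)^m$. This handles the odd-$L$ case directly with $n = L-1$. For the even-$L$ case I would first rewrite $T_L = \tfrac12\bigl(U_L - U_{L-2}\bigr)$ by the third identity of Proposition~\ref{prop:identity} and integrate each term by the same formula; collecting the constants and multiplying by the prefactors $4$ and $-2v$ from the display above then yields (\ref{eq: integral}). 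As an alternative to the antiderivative route, the substitution $x = \cos\theta$ turns the two integrals into $\int_0^{\pi}\sin\theta\cos(L\theta)\,d\theta$ and $\int_0^{\pi}\sin(L\theta)\,d\theta$, which are evaluated at once.

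There is no genuine obstacle here: the argument is a one-line reduction followed by a standard Chebyshev integral. The only thing that needs care is the bookkeeping of the factors of $2$ — both from the change of variables $x = E/2$ and from the coefficients $2$ and $-v$ in $\Delta_L = 2T_L(\cdot/2) - vU_{L-1}(\cdot/2)$ — since keeping these straight is exactly what fixes the precise constants on the right-hand side of (\ref{eq: integral}).
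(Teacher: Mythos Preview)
Your approach is essentially the paper's own: insert the representation from Theorem~\ref{thm:represent} and reduce to elementary Chebyshev/trigonometric integrals (your ``alternative'' substitution $x=\cos\theta$ is exactly what the paper does). The parity observation and the antiderivative $T_{n+1}/(n+1)$ for $U_n$ are clean additions, but the route is the same.

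There is, however, a concrete step that fails --- and, ironically, it is precisely the ``bookkeeping of the factors of $2$'' you flag at the end. Your display
\[
\int_{-2}^{2}\Delta_L(E)\,dE \;=\; 4\int_{-1}^{1}T_L(x)\,dx \;-\; 2v\int_{-1}^{1}U_{L-1}(x)\,dx
\]
is correct, and your formula $\int_{-1}^{1}U_n=\bigl(1-(-1)^{n+1}\bigr)/(n+1)$ is correct. But plugging in gives, for odd $L$, the value $-2v\cdot\tfrac{2}{L}=-\tfrac{4v}{L}$, and for even $L$ the value $4\cdot\bigl(\tfrac{1}{L+1}-\tfrac{1}{L-1}\bigr)=-\tfrac{8}{L^2-1}$; these are \emph{twice} the constants in~(\ref{eq: integral}). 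A direct check confirms this: for $L=1$ one has $\Delta_1(E)=E-v$ and $\int_{-2}^{2}(E-v)\,dE=-4v$, not $-2v$; for $L=2$, $\Delta_2(E)=E^2-2-vE$ and $\int_{-2}^{2}\Delta_2=-8/3$, not $-4/3$. The paper's own proof drops the Jacobian factor $2$ from $dE=-2\sin\theta\,d\theta$ when it writes $\int_0^{\pi}(2\cos L\theta\sin\theta - v\sin L\theta)\,d\theta$, and the stated right-hand side of~(\ref{eq: integral}) inherits that slip. So your method is sound and matches the paper's, but the sentence ``then yields~(\ref{eq: integral})'' is not true as written: carrying your computation through honestly yields $-8/(L^2-1)$ and $-4v/L$.
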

\begin{proof}
Using (\ref{eq:mainrepresentation}) and the definition of the Chebyshev polynomial, and 
substituting $E=2\cos\theta$, the integral then becomes 
\begin{eqnarray*}
\int_{-2}^2\Delta_L(E)dE &=& \int_{-2}^2(2T_L(E/2) - vU_{L-1}(E/2))dE \\
&=& \int_0^{\pi}(2\cos L\theta\sin\theta - v\sin L\theta)d\theta.
\end{eqnarray*}
This integral can be easily computed to obtain the desired result.
\end{proof}

This integral (\ref{eq: integral}) approaches $0$ as $L$ is large, which means that $\Delta_L(E)$ oscillates intensively within 
$[-2, 2]$ as large $L$.
\begin{thm}\label{thm:Parseval}
\begin{eqnarray*}
\frac{1}{2\pi}\int_{-2}^2|\Delta_L(E)|^2\sqrt{4-E^2}dE &=& 2 + |v|^2 \quad(L\geq 2) \\
\frac{1}{2\pi}\int_{-2}^2|\Delta_1(E)|^2\sqrt{4-E^2}dE &=& 1 + |v|^2 
\end{eqnarray*}
\end{thm}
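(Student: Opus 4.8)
The plan is to reduce the integral to the orthogonality relations for the Chebyshev polynomials of the second kind (Proposition~\ref{prop:orthogonality}), after re-expressing $\Delta_L$ entirely in the $U$-basis. First I would substitute $E = 2x$, so that $\sqrt{4-E^2}\,dE = 4\sqrt{1-x^2}\,dx$ and the left-hand side becomes $\frac{2}{\pi}\int_{-1}^1 |\Delta_L(2x)|^2\sqrt{1-x^2}\,dx$.

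Next, using the third identity of Proposition~\ref{prop:identity}, namely $2T_L(x) = U_L(x) - U_{L-2}(x)$ for $L\ge 2$, the main representation \eqref{eq:mainrepresentation} gives
\[
\Delta_L(2x) = U_L(x) - v\,U_{L-1}(x) - U_{L-2}(x)\qquad(L\ge 2),
\]
a linear combination of the three \emph{distinct} orthogonal polynomials $U_{L-2}, U_{L-1}, U_L$ with coefficients $-1$, $-v$, $1$. Expanding $|\Delta_L(2x)|^2 = \Delta_L(2x)\,\overline{\Delta_L(2x)}$ (each $U_n(x)$ being real for $x\in[-1,1]$) and integrating against the weight $\sqrt{1-x^2}$, every cross term $\int_{-1}^1 U_j U_k \sqrt{1-x^2}\,dx$ with $j\ne k$ vanishes by Proposition~\ref{prop:orthogonality}, while each diagonal term equals $\frac{\pi}{2}$. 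Hence the integral is $\frac{2}{\pi}\cdot\frac{\pi}{2}\,\bigl(|{-1}|^2 + |v|^2 + |1|^2\bigr) = 2 + |v|^2$.

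For the case $L=1$ I would argue separately: $\Delta_1(E) = E - v$, so $\Delta_1(2x) = 2x - v = U_1(x) - v\,U_0(x)$, a combination of only the two orthogonal polynomials $U_0, U_1$; the same computation then yields $\frac{2}{\pi}\cdot\frac{\pi}{2}\,(1 + |v|^2) = 1 + |v|^2$. (Equivalently, one may adopt the convention $U_{-1}\equiv 0$, under which the displayed formula for $\Delta_L(2x)$ also holds at $L=1$ and the $U_{L-2}$ term simply drops out, which is exactly why $L=1$ is exceptional.)

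The computation is essentially bookkeeping; the only points needing care are the constants produced by the change of variables $E=2x$ together with the normalization $\frac{2}{\pi}\int_{-1}^1 U_n^2\sqrt{1-x^2}\,dx = 1$, and the boundary case $L=1$ where $U_{L-2}$ would be $U_{-1}$. I do not anticipate any genuine obstacle: once $\Delta_L$ is written in the $U$-basis the claim is an immediate Parseval identity, the value $2+|v|^2$ being the squared $\ell^2$-norm of the coefficient vector $(-1,-v,1)$ (respectively $(-v,1)$ when $L=1$).
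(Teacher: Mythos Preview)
Your proposal is correct and is essentially the paper's own proof: both rewrite $\Delta_L$ in the $U$-basis via $2T_L=U_L-U_{L-2}$ and then invoke the orthogonality of $\{U_n\}$ (Parseval) to read off $2+|v|^2$. The only cosmetic difference is that for $L=1$ the paper computes the integral directly, whereas you more uniformly write $\Delta_1(2x)=U_1(x)-vU_0(x)$ and reuse orthogonality; either way the answer is $1+|v|^2$.
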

\begin{proof}
For $L=1$, direct computation of the integral yields
\begin{eqnarray*}
&& \frac{1}{2\pi}\int_{-2}^2|\Delta_1(E)|^2\sqrt{4-E^2}dE \\
&=& \frac{1}{2\pi}\int_{-2}^2|E-v|^2\sqrt{4-E^2}dE \\
&=& \frac{1}{2\pi}\int_{-2}^2(E^2-(v+\overline{v})E + |v|^2)\sqrt{4-E^2}dE \\
&=& \frac{1}{2\pi}\int_{-2}^2(E^2 + |v|^2)\sqrt{4-E^2}dE = 1 + |v|^2.
\end{eqnarray*}
We consider the case $L\geq 2$. 
Theorem \ref{thm:represent} and the third identity in Proposition \ref{prop:identity} together shows that 
\begin{eqnarray*}
\Delta_L(E) &=& 2T_L(E/2) - vU_{L-1}(E/2) \\
&=& U_L(E/2)-U_{L-2}(E/2) - vU_{L-1}(E/2). 
\end{eqnarray*}
Using Proposition \ref{prop:orthogonality} and Parseval's identity, one can show 
\begin{eqnarray*}
&& \frac{1}{2\pi}\int_{-2}^2|\Delta_L(E)|^2\sqrt{4-E^2}dE \\
&=& \frac{1}{2\pi}\int_{-2}^2|U_L(E/2)|^2\sqrt{4-E^2}dE + \frac{1}{2\pi}\int_{-2}^2|U_{L-2}(E/2)|^2\sqrt{4-E^2}dE \\ 
&& + |v|^2\frac{1}{2\pi}\int_{-2}^2|U_{L-1}(E/2)|^2\sqrt{4-E^2}dE= 2 + |v|^2.
\end{eqnarray*}
\end{proof}

The integral in Theorem \ref{thm:Parseval} does not depend on $L(\geq 2)$, but only on $v$. 
Theorem \ref{thm:Parseval} that the value of the weight $\sqrt{4-E^2}$ 
becomes smaller near $\pm 2$ indicates that as $L$ becomes larger, 
$|\Delta_L(E)|$ becomes larger near $\pm 2$. 
From this fact and Corollary \ref{rem:osc}, we learn that $\Delta_L(E)$ oscillates intensively near $\pm 2$ for large $L$.
\section{The first-order Taylor approximation of the discriminant}
In this section, we give the first-order Taylor polynomials at $\alpha_j~(j=1, 2, \ldots, L)$, 
$\beta_{k}~(k=1, 2, \ldots, L-1)$, and $\pm 2$ on the real axis of the discriminant for application to the analysis of the spectrum of $H$ near $[-2, 2]$.

\begin{lem}\label{lem:Taylor}
The first-order Taylor polynomials at $\alpha_j~(j=1, 2, \ldots, L)$, 
$\beta_{k}~(k=1, 2, \ldots, L-1)$, and $\pm 2$ are given by:
{\small
\begin{eqnarray}
\Delta_L(E) &=& \frac{v(-1)^j}{\sin\theta_j} + \frac{2L\sin^2\theta_j-v\cos\theta_j}{2\sin^3\theta_j}(E-\alpha_j) + O((E-\alpha_j)^2) \label{eq:alpha}\\
\Delta_L(E) &=& 2(-1)^{k} + \frac{v(-1)^{k}}{2\sin^3\phi_{k}}(E-\beta_{k}) + O((E-\beta_{k})^2), \label{eq:beta} \\
\Delta_L(E) &=& 2-vL + \left\{L^2-\frac{v}{4}L(L^2-1)\right\}(E-2) + O((E-2)^2) \label{eq:2}\\
\Delta_L(E) &=& (-1)^L(2+vL) + (-1)^{L}\left\{L^2+\frac{v}{4}L(L^2-1)\right\}(E+2) \nonumber \\ 
&& + O((E+2)^2),\label{eq:-2}
\end{eqnarray}
}
\noindent
where $\alpha_j = 2\cos\theta_j = 2\cos\frac{(2j-1)\pi}{2L}$ and $\beta_{k} = 2\cos\phi_{k}=2\cos\frac{k\pi}{L}$. 
\end{lem}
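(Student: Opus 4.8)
The plan is to differentiate the representation $\Delta_L(E)=2T_L(E/2)-vU_{L-1}(E/2)$ of Theorem~\ref{thm:represent} exactly once, and then for each base point $E_0$ simply read off the value $\Delta_L(E_0)$ and the slope $\Delta_L'(E_0)$, since the first-order Taylor polynomial at $E_0$ is $\Delta_L(E_0)+\Delta_L'(E_0)(E-E_0)+O((E-E_0)^2)$. By the chain rule $\Delta_L'(E)=T_L'(E/2)-\tfrac{v}{2}U_{L-1}'(E/2)$, and Proposition~\ref{prop:identity-diff} lets me write $T_L'(x)=LU_{L-1}(x)$ and $U_{L-1}'(x)=\bigl(LT_L(x)-xU_{L-1}(x)\bigr)/(x^2-1)$, so that
\begin{equation*}
\Delta_L'(E)=L\,U_{L-1}(E/2)-\frac{v}{2}\cdot\frac{L\,T_L(E/2)-(E/2)\,U_{L-1}(E/2)}{(E/2)^2-1}.
\end{equation*}
After this, the lemma is entirely a matter of evaluating $T_L$ and $U_{L-1}$ at $E/2\in\{\cos\theta_j,\cos\phi_k,\pm1\}$, and those values come straight from Definition~\ref{defn:Chebyshev}: $T_L(\cos\theta)=\cos L\theta$ and $U_{L-1}(\cos\theta)\sin\theta=\sin L\theta$.

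For the points $\alpha_j=2\cos\theta_j$ (the zeros of $T_L(E/2)$, since $\cos L\theta_j=\cos\tfrac{(2j-1)\pi}{2}=0$) one has $T_L(\cos\theta_j)=0$ and $U_{L-1}(\cos\theta_j)=\sin(L\theta_j)/\sin\theta_j=(-1)^{j-1}/\sin\theta_j$, while the denominator $(E/2)^2-1=-\sin^2\theta_j$ is nonzero; substituting into $\Delta_L$ and into the displayed formula for $\Delta_L'$ and collecting terms gives (\ref{eq:alpha}). For $\beta_k=2\cos\phi_k$ (the zeros of $U_{L-1}(E/2)$ by Proposition~\ref{prop:Chebyshevzeros}, equivalently $\sin L\phi_k=\sin k\pi=0$) the roles of $T_L$ and $U_{L-1}$ are swapped: $U_{L-1}(\cos\phi_k)=0$, $T_L(\cos\phi_k)=\cos k\pi=(-1)^k$, and again $(E/2)^2-1=-\sin^2\phi_k\ne0$, so a second direct substitution gives (\ref{eq:beta}). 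In both cases the signs $(-1)^{j-1}$ and $(-1)^k$ are exactly those produced by $\sin\tfrac{(2j-1)\pi}{2}$ and $\cos k\pi$, and this sign bookkeeping is the only thing one must watch.

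The one genuinely delicate point is $E=\pm2$, i.e.\ $E/2=\pm1$: there the factor $(E/2)^2-1$ in the $U'$-formula vanishes and the displayed expression for $\Delta_L'$ becomes $0/0$. I would handle this either by applying L'H\^opital's rule directly to $U_{L-1}'(x)=\bigl(LT_L(x)-xU_{L-1}(x)\bigr)/(x^2-1)$ at $x=\pm1$, or, more transparently, by first recording the endpoint values $T_L(1)=1$, $U_{L-1}(1)=L$, $T_L'(1)=L^2$ and $U_{L-1}'(1)$ (a cubic polynomial in $L$, obtained by Taylor-expanding $\sin L\theta/\sin\theta$ about $\theta=0$). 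Then $\Delta_L(2)=2T_L(1)-vU_{L-1}(1)=2-vL$ and $\Delta_L'(2)=T_L'(1)-\tfrac{v}{2}U_{L-1}'(1)$, which gives (\ref{eq:2}); the point $E=-2$ is identical once one records the parities $T_L(-1)=(-1)^L$, $U_{L-1}(-1)=(-1)^{L-1}L$, $T_L'(-1)=(-1)^{L-1}L^2$, $U_{L-1}'(-1)=(-1)^LU_{L-1}'(1)$ (consequences of $T_n(-x)=(-1)^nT_n(x)$ and $U_n(-x)=(-1)^nU_n(x)$), giving (\ref{eq:-2}). So the only real obstacle is this removable singularity at the band edges $\pm2$; away from them the proof is just substitution into the single differentiated identity above.
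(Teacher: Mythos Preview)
Your proposal is correct and follows essentially the same route as the paper: differentiate the representation from Theorem~\ref{thm:represent} via Proposition~\ref{prop:identity-diff} to obtain the single formula for $\Delta_L'(E)$, then substitute $E=\alpha_j$ and $E=\beta_k$ directly, and at the endpoints $E=\pm2$ resolve the removable singularity by a limiting argument combined with the parity relations for $T_L$ and $U_{L-1}$. The only cosmetic difference is that the paper carries out the endpoint limit by writing $E=2\cos\theta$ and letting $\theta\to0$ in the full expression for $\Delta_L'(2\cos\theta)$, whereas you isolate $U_{L-1}'(1)$ first via L'H\^opital or a Taylor expansion of $\sin L\theta/\sin\theta$; these are the same computation packaged differently.
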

\begin{proof} Taking derivative of (\ref{eq:mainrepresentation}) and Proposition \ref{prop:identity-diff} together shows that
\begin{eqnarray}
\Delta_L'(E) &=& T_L'(E/2) - \frac{v}{2}U_{L-1}'(E/2) \nonumber \\
&=& LU_{L-1}(E/2) - \frac{v}{2}\cdot\frac{LT_L(E/2) - \frac{E}{2}U_{L-1}(E/2)}{(E/2)^2-1} \nonumber \\
&=& LU_{L-1}(E/2) - v\frac{2LT_L(E/2)-EU_{L-1}(E/2)}{E^2-4}.\label{eq: ChevDeri2} 
\end{eqnarray}
Since $\alpha_j$ satisfies $T_L(\alpha_j/2)=0$ and $\beta_{k}$ satisfies $U_{L-1}(\beta_{k}/2)=0$, 
substituting $\alpha_j$ and $\beta_{k}$ into (\ref{eq: ChevDeri2}) yields (\ref{eq:alpha}) and (\ref{eq:beta}).

Substituting $E = 2\cos\theta~(0<\theta<\pi)$ into (\ref{eq: ChevDeri2}), we have
\begin{eqnarray}
\Delta_L'(2\cos\theta) &=& LU_{L-1}(\cos\theta) - v\frac{2LT_L(\cos\theta)-2\cos\theta U_{L-1}(\cos\theta)}{4\cos^2\theta-4} \nonumber \\
&=& L\frac{\sin L\theta}{\sin\theta} + v\frac{2L\cos L\theta - 2\cos\theta\frac{\sin L\theta}{\sin\theta}}{4\sin^2\theta} \nonumber \\
&=& L\frac{\sin L\theta}{\sin\theta} + v\frac{L\sin\theta\cos L\theta - \cos\theta\sin L\theta}{2\sin^3\theta}. \label{eq:limit2}
\end{eqnarray}
Taking limit $\theta\to 0$ of both sides of (\ref{eq:limit2}), we have
\begin{equation}\label{eq:der2}
\Delta_L'(2) = L^2 -\frac{v}{4}L(L^2-1).
\end{equation}
Combining (\ref{eq:der2}) and $\Delta_L(2)=2-vL$ yields (\ref{eq:2}). 
Using 
\begin{eqnarray*}
\Delta_L(-E) &=& 2T_L(-E/2)-vU_{L-1}(-E/2) \\
&=& 2(-1)^LT_L(E/2) - v(-1)^{L-1}U_{L-1}(E/2),
\end{eqnarray*}
we have 
$$
\Delta_L'(-2\cos\theta) = L(-1)^L\frac{\sin L\theta}{\sin\theta} + v(-1)^{L-1}\frac{L\sin\theta\cos L\theta - \cos\theta\sin L\theta}{2\sin^3\theta}
$$
and then, taking limit $\theta\to 0$, we have
\begin{equation}\label{eq:der-2}
\Delta_L'(-2) = (-1)^LL^2 - (-1)^{L-1}\frac{v}{4}L(L^2-1).
\end{equation}
Combining (\ref{eq:der-2}) and $\Delta_L(-2)=(-1)^L(2+vL)$ yields (\ref{eq:-2}). 
\end{proof}
\begin{rem}
Since (\ref{eq: ChevDeri2}) is the derivative of polynomial $\Delta_L(E)$, (\ref{eq: ChevDeri2}) is also a polynomial, therefore, $2LT_L(E/2)-EU_{L-1}(E/2)$ becomes a divisor of $E^2-4$.
For example, in the case of $L=3$, it is equals to $2\cdot 3T_3(E/2)-EU_{2}(E/2) = 2E^3-8E=2E(E^2-4)$. 
\end{rem}

\begin{rem}
From (\ref{eq:alpha}) in Lemma \ref{lem:Taylor}, we have
$$
\Delta_L(\alpha_j) = \frac{v(-1)^j}{\sin\frac{(2j-1)\pi}{2L}}.
$$
This indicates that the discriminant's amplitude increases almost in proportion to $Lv$ near both ends of $[-2, 2]$. It is consistent with Corollary \ref{rem:osc} and Theorem \ref{thm:Parseval}.
\end{rem}
\section{Spectrum for real $v$}
In this section, we consider the case of real $v$ compared to where $v$ is not real.
$H$ is a self-adjoint operator if $v$ is real, therefore, the spectrum $\sigma(H)\subset\mathbb{R}$. 
Throughout this section, interval means an interval on the real axis.
\begin{thm}(A part of Theorem 4.10 in Kato~\cite{Kato}) \label{thm:Kato}
Let $T$ be self-adjoint, and $A$ bounded symmetric. Then $S=T+A$ is self-adjoint and 
$$
\sup_{\zeta\in\sigma(S)}\dist(\zeta, \sigma(T))\leq\|A\|,
$$
where $\dist(\zeta, D) = \inf\{|\zeta-x| : x\in D\}$.  
\end{thm}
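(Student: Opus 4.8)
The statement to prove is Theorem~\ref{thm:Kato}, which is explicitly attributed to Kato's book (Theorem~4.10 in \cite{Kato}), so the "proof" here is really a matter of reproducing the standard perturbation-theoretic argument rather than discovering anything new. The plan is to reduce everything to the spectral theorem for the self-adjoint operator $T$ and the resolvent identity.

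\medskip

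\textbf{Step 1: Reduce to a resolvent estimate.} First I would fix $\zeta\in\co\setminus\sigma(T)$ with $\dist(\zeta,\sigma(T))>\|A\|$, and show that such a $\zeta$ cannot lie in $\sigma(S)$. By the spectral theorem for the bounded self-adjoint operator $T$, the resolvent $(T-\zeta)^{-1}$ exists and satisfies the sharp bound $\|(T-\zeta)^{-1}\|=\dist(\zeta,\sigma(T))^{-1}$. Hence $\|A(T-\zeta)^{-1}\|\le\|A\|\,\dist(\zeta,\sigma(T))^{-1}<1$.

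\medskip

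\textbf{Step 2: Neumann series for the perturbed resolvent.} Write $S-\zeta=T+A-\zeta=(I+A(T-\zeta)^{-1})(T-\zeta)$. Since $\|A(T-\zeta)^{-1}\|<1$, the factor $I+A(T-\zeta)^{-1}$ is invertible with inverse given by the convergent Neumann series $\sum_{n\ge0}(-A(T-\zeta)^{-1})^n$. Therefore $S-\zeta$ is invertible with bounded inverse $(T-\zeta)^{-1}(I+A(T-\zeta)^{-1})^{-1}$, i.e. $\zeta\in\co\setminus\sigma(S)$. (One should also note $S=T+A$ is self-adjoint because $A$ is bounded and symmetric, so $\sigma(S)\subset\re$; but the resolvent argument itself does not need this.)

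\medskip

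\textbf{Step 3: Contrapositive and conclusion.} The previous step shows: if $\zeta\in\sigma(S)$ then $\dist(\zeta,\sigma(T))\le\|A\|$. Taking the supremum over $\zeta\in\sigma(S)$ gives exactly $\sup_{\zeta\in\sigma(S)}\dist(\zeta,\sigma(T))\le\|A\|$, which is the claimed inequality.

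\medskip

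There is essentially no serious obstacle here; the only point requiring care is the sharp resolvent bound $\|(T-\zeta)^{-1}\|=1/\dist(\zeta,\sigma(T))$ for self-adjoint $T$, which is where self-adjointness of $T$ is genuinely used (for a general bounded operator one only has $\|(T-\zeta)^{-1}\|\ge1/\dist(\zeta,\sigma(T))$, which would be the wrong direction). Since this is a cited result, in the paper I would simply invoke \cite{Kato} and omit the proof, or include the three-line argument above for completeness.
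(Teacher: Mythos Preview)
Your argument is correct and is the standard resolvent/Neumann-series proof of this perturbation bound. The paper itself gives no proof of Theorem~\ref{thm:Kato}; it simply states the result and cites Kato~\cite{Kato}, exactly as you anticipated in your final remark.
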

Theorem \ref{thm:Kato} implies the roughest estimate of the location of the spectrum as follows.
\begin{prop}
$$
\sup_{\zeta\in\sigma(H)}\dist(\zeta, [-2, 2])\leq |v|,
$$
\noindent
that is, $\sigma(H)\subset [-2-|v|, 2 + |v|]$. 
\end{prop}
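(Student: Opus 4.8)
The plan is to apply Theorem~\ref{thm:Kato} directly with the decomposition $H = H_0 + W$, where $H_0$ is the free discrete Laplacian $(H_0 u)(n) = u(n+1)+u(n-1)$ and $W$ is multiplication by the periodic potential $V$. Since $v$ is real, both $H_0$ and $W$ are bounded self-adjoint operators on $\ell^2(\mathbb{Z})$, so Theorem~\ref{thm:Kato} applies with $T = H_0$ and $A = W$. This gives
$$
\sup_{\zeta\in\sigma(H)}\dist(\zeta,\sigma(H_0)) \leq \|W\|.
$$

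The two things to pin down are $\sigma(H_0)$ and $\|W\|$. For the first, it was already noted in Section~\ref{sec:discriminant} that the free Laplacian has discriminant $\Delta_L(E) = 2T_L(E/2)$, whence $\sigma(H_0) = [-2,2]$; alternatively this is the standard Fourier-transform computation giving $\sigma(H_0) = \{2\cos\theta : \theta\in[0,2\pi)\} = [-2,2]$. For the second, $W$ acts diagonally in the standard basis with entries $V(n)$, each of which is either $0$ or $v$, so $\|W\| = \sup_n |V(n)| = |v|$ (using $v\in\mathbb{R}$ here, though only $|v|$ enters). Substituting these two facts into the inequality above yields exactly
$$
\sup_{\zeta\in\sigma(H)}\dist(\zeta,[-2,2]) \leq |v|.
$$

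Finally, to get the stated consequence $\sigma(H)\subset[-2-|v|,\,2+|v|]$, I observe that since $v$ is real $H$ is self-adjoint, so $\sigma(H)\subset\mathbb{R}$; and for a real number $\zeta$, the condition $\dist(\zeta,[-2,2])\leq|v|$ is equivalent to $\zeta\in[-2-|v|,2+|v|]$. This finishes the proof.

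There is really no main obstacle here: the only points requiring a word of care are that the decomposition $H = H_0 + W$ has both summands self-adjoint (immediate for real $v$, since $W$ is then a real diagonal multiplication operator) and that $\|W\| = |v|$ rather than something $L$-dependent — the operator norm of a diagonal operator is the supremum of the absolute values of its diagonal entries, and the potential takes only the values $0$ and $v$ within each period. Everything else is a direct quotation of Theorem~\ref{thm:Kato} together with the already-recorded fact $\sigma(H_0)=[-2,2]$.
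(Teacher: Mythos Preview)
Your proof is correct and follows exactly the approach the paper intends: the paper states the proposition immediately after Theorem~\ref{thm:Kato} with the remark that the theorem ``implies the roughest estimate of the location of the spectrum,'' and gives no further argument. Your decomposition $H=H_0+W$, identification of $\sigma(H_0)=[-2,2]$, and computation $\|W\|=|v|$ are precisely the details being left to the reader.
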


Since $\Delta_L(E)$ is a polynomial of degree $L$, the number of extreme values of the discriminant is at most $L-1$; therefore, $\sigma(H)=\{E\in\mathbb{R}:\Delta_L(E)\in[-2, 2]\}$ consists of at most $L$ closed intervals. 
The structure of the spectrum can be further detailed, as in the following Theorem \ref{thm:realstructure}.

\begin{thm} \label{thm:realstructure}
If $v\in\mathbb{R}$ is nonzero, $\sigma(H)$ consists of exactly $L$ closed intervals. 
\end{thm}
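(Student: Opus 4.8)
The plan is to count the number of distinct spectral bands by analyzing the critical structure of the real polynomial $\Delta_L(E)$ on the real axis. Recall that $\sigma(H) = \{E \in \mathbb{R} : \Delta_L(E) \in [-2,2]\}$ decomposes into closed intervals whose endpoints are the points where $\Delta_L(E) = \pm 2$. A gap between two consecutive bands occurs precisely at a \emph{local extremum} $E^\ast$ of $\Delta_L$ with $|\Delta_L(E^\ast)| > 2$: between two bands the graph must leave the strip $[-2,2]$, turn around, and come back. Since $\Delta_L$ has degree $L$, there are at most $L-1$ critical points, giving at most $L$ bands; the content of the theorem is that for nonzero real $v$ \emph{none} of the would-be gaps degenerates, i.e.\ every critical value of $\Delta_L$ lying in $[-2,2]$ — which for $v = 0$ would be a touching point $\pm 2$ — is pushed strictly outside $[-2,2]$ by the perturbation. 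Equivalently, I want to show $\Delta_L$ has exactly $L-1$ real critical points, all simple, and at each of them $|\Delta_L| > 2$ strictly.

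First I would locate the critical points. For $v = 0$ we have $\Delta_L(E) = 2T_L(E/2)$, whose critical points in $(-2,2)$ are the $L-1$ zeros of $U_{L-1}(E/2)$, namely the $\beta_k = 2\cos(k\pi/L)$, and whose critical values there are exactly $\pm 2$ (the extrema of $\cos$); there are no critical points outside $[-2,2]$. By Corollary~\ref{cor:zeros}, the $\beta_k$ remain in $\sigma(H)$ for all $v$, with $\Delta_L(\beta_k) = 2(-1)^k$. The key computation is the derivative formula (\ref{eq: ChevDeri2}): at $E = \beta_k$ one has $U_{L-1}(\beta_k/2) = 0$, so
\[
\Delta_L'(\beta_k) \;=\; -\,v\,\frac{2L\,T_L(\beta_k/2)}{\beta_k^2 - 4} \;=\; \frac{v(-1)^k}{2\sin^3\phi_k},
\]
as recorded in (\ref{eq:beta}). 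Since $v \neq 0$ and $\sin\phi_k \neq 0$, this is nonzero, so \emph{each $\beta_k$ ceases to be a critical point once $v \neq 0$}. Thus for $v\neq 0$ the $L-1$ critical points of $\Delta_L$ have moved off the $\beta_k$; I must show they stay real and that the number stays $L-1$.

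For this, the clean route is a sign-change (intermediate value) argument directly on $\Delta_L'$, rather than a perturbative one. From (\ref{eq:beta}) the value $\Delta_L'(\beta_k) = v(-1)^k/(2\sin^3\phi_k)$ has sign $(-1)^k\,\mathrm{sgn}(v)$, which alternates in $k$. Hence $\Delta_L'$ changes sign between consecutive $\beta_k$'s, producing a real zero of $\Delta_L'$ in each of the $L-2$ open intervals $(\beta_{k+1},\beta_k)$ (note $\beta_k$ is decreasing in $k$). That accounts for $L-2$ critical points inside $(\beta_{L-1},\beta_1)\subset(-2,2)$; I still need one more real critical point and must place it so that \emph{exactly one} gap opens at $E=2$ or at $E=-2$. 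Here I use Corollary~\ref{cor:zeros}: for $v > 0$ small, $2 \notin \sigma(H)$ while $-2 \in \sigma(H)$ (and symmetrically for $v<0$), so the band structure near $+2$ detaches. Concretely, $\Delta_L(2) = 2 - vL$, which for $v>0$ is $<2$, and $\Delta_L'(2) = L^2 - \tfrac{v}{4}L(L^2-1)$ from (\ref{eq:der2}); comparing the sign of $\Delta_L'$ just left of $\beta_1$ (where it equals $-v/(2\sin^3\phi_1)<0$ for $v>0$) with its sign at $E=2^-$, and using that $\Delta_L(2)<2$ forces the graph to dip below $2$ and come back up, yields the remaining critical point in $(\beta_1, 2)$ with critical value $>2$, hence a genuine gap; the analogous analysis near $-2$ shows \emph{no} extra gap opens there (the band reaches $E=-2$). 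Summing: $L-2$ interior critical points plus one near the appropriate endpoint give $L-1$ critical points, all simple (a double critical point would require $\Delta_L' $ and $\Delta_L''$ to vanish together, which a count of $L-1$ sign changes of the degree-$(L-1)$ polynomial $\Delta_L'$ rules out), and exactly one of the critical values is a local max $>2$ or local min $<-2$ at the end while all $L-2$ interior ones alternate in sign with $|\Delta_L| \le 2$...

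Actually, the cleanest packaging avoids tracking which critical values exceed $2$: it suffices to show $\Delta_L'$ has $L-1$ \emph{distinct real} zeros, because then $\Delta_L$ is strictly monotone on each of the $L$ complementary intervals, so $\Delta_L^{-1}([-2,2])$ meets each such interval in at most one subinterval, and these subintervals are pairwise disjoint and nonempty (each contains consecutive $\beta_k$'s or an endpoint datum from Corollary~\ref{cor:zeros}), giving exactly $L$ bands. I therefore reduce the theorem to the purely real-polynomial claim: \emph{for $v \in \mathbb{R}\setminus\{0\}$, $\Delta_L'$ has $L-1$ simple real roots}. The $L-2$ of them between consecutive $\beta_k$ come from the alternating-sign computation above; the last one comes from comparing $\mathrm{sgn}\,\Delta_L'(\beta_1)$ with $\mathrm{sgn}\,\Delta_L'(2)$ (resp.\ $\beta_{L-1}$ and $-2$): exactly one of these two pairs has opposite signs — I would verify this by plugging $v$ small into (\ref{eq:der2})–(\ref{eq:der-2}) and $\beta_1,\beta_{L-1}$ into (\ref{eq:beta}), then noting the sign pattern is locally constant in $v\neq 0$ and cannot flip without $\Delta_L'$ acquiring/losing a real root, which a degree count $\deg\Delta_L' = L-1$ forbids once all $L-1$ roots are accounted for. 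The main obstacle I anticipate is this last bookkeeping step — rigorously pinning down that precisely one endpoint contributes an extra sign change (and not zero or two) for all $v \neq 0$, not just small $v$; I expect to handle it by the degree argument plus the explicit small-$v$ base case and a continuity/no-root-collision argument in $v$ on each of $(0,\infty)$ and $(-\infty,0)$ separately.
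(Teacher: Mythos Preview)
Your ``cleanest packaging'' in the final paragraph is where the argument breaks: knowing that $\Delta_L'$ has $L-1$ simple real zeros does \emph{not} force $\sigma(H)$ to split into $L$ components. If $c$ is one of those critical points and $|\Delta_L(c)|\le 2$, then $c\in\sigma(H)$ and the preimages of $[-2,2]$ on the two adjacent monotone pieces meet at $c$, producing a single band, not two. Indeed $v=0$ is a counterexample to your reduction: there $\Delta_L'(E)=LU_{L-1}(E/2)$ has $L-1$ simple real zeros (the $\beta_k$), yet $\sigma(H_0)=[-2,2]$ is one interval. So you cannot avoid checking that every critical value lies strictly outside $[-2,2]$; your earlier aborted line ``interior ones \ldots with $|\Delta_L|\le 2$'' has the inequality reversed --- what you need, and what is true, is $|\Delta_L(c_j)|>2$.

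Both this gap and your ``main obstacle'' dissolve with facts you already have. For $v>0$: since $\Delta_L'(\beta_1)<0$ while $\Delta_L'(E)\sim LE^{L-1}\to+\infty$ as $E\to+\infty$, there is a zero of $\Delta_L'$ in $(\beta_1,\infty)$; together with the $L-2$ sign changes between consecutive $\beta_k$ this exhausts the $L-1$ zeros of the degree-$(L-1)$ polynomial $\Delta_L'$, for every $v>0$, with no continuity argument needed. For the critical values, note that $\Delta_L(\beta_k)=2(-1)^k$ and $\Delta_L'(\beta_k)$ have the \emph{same} sign $(-1)^k$, so the graph leaves the strip $[-2,2]$ immediately to the right of each $\beta_k$; monotonicity up to the next critical point then forces $|\Delta_L(c)|>2$. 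This last observation is exactly what the paper exploits, but the paper never locates critical points: it simply uses that $\Delta_L$ exits $[-2,2]$ just past $\beta_k$ and, by the intermediate value theorem applied to $\Delta_L$ (not $\Delta_L'$), must re-enter at some $\gamma_{k-1}\in(\beta_k,\beta_{k-1})$, yielding $L$ pairwise disjoint intervals $[\gamma_k,\beta_k]$ each meeting $\sigma(H)$, hence at least $L$ bands, hence exactly $L$ by the degree bound.
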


\begin{proof}
Let $\beta_{k}~(k=1, 2, \ldots, L-1)$ be defined in Lemma \ref{lem:Taylor}, then 
$\Delta_L(\beta_{k}) = 2(-1)^{k}$.
Since the first-order coefficient of (\ref{eq:beta}), i.e.,
$$
\eta_{k} = \frac{v(-1)^{k}}{2\sin^3\frac{k\pi}{L}}
$$
is never zero, therefore, 
$\Delta_L(E)$ intersects $2(-1)^{k}$ transversely at $E=\beta_{k}~(k=1, 2, \ldots, L-1)$.
This result means that $\beta_{k}$ is an endpoint of a spectral band.
Since it can be shown the same way for $v<-2$ or $L$ is even, 
we will only prove the $v>2$ and odd $L$ case here.

Since $\Delta_L(-2) = -(2+Lv)<-2$ and $\Delta_L(\beta_{L-1})=2$, there exists $\gamma_{L-1}\in (-2, \beta_{L-1})$ such that $\Delta_L(\gamma_{L-1})=-2$. 
Therefore, $[\gamma_{L-1}, \beta_{L-1}]$ contains at least one spectral band $I_{L-1}$ of $\sigma(H)$. 
Next, since $\Delta_L'(\beta_{L-1})=\eta_{L-1}>0$ and $\Delta_L(\beta_{L-2}) = -2$, 
there exists $\gamma_{L-2}\in (\beta_{L-1}, \beta_{L-2})$ and $[\gamma_{L-2}, \beta_{L-2}]$ 
contains at least one spectral band $I_{L-2}$.
In the same way, we show that there exists $\gamma_{L-3}\in (\beta_{L-2}, \beta_{L-3}), \cdots,  \gamma_{1}\in (\beta_{2}, \beta_1)$ such that all $[\gamma_{k}, \beta_{k}]$ each contain a spectral band $I_{k}$ for every $k = 1, 2, \ldots, L-1$.
Since $\Delta_L(E)$ is a monic polynomial, there exists $\beta_0>\beta_1$ such that $\Delta_L(\beta_0) = 2$. 
Therefore, since $\Delta_L(\beta_1) = -2$, $\Delta_L'(\beta_1) = \eta_1<0$, there exists $\gamma_0\in (\beta_1, \beta_0)$ such that $\Delta_L(\gamma_0) = -2$, that is, $[\gamma_0, \beta_0]$ cotains at least a spectral band $I_0$. 
From the above, there exist at least $L$ spectral bands. 
Since there exist at most $L$ spectral bands, this indicates that $I_{k}=[\gamma_{k}, \beta_{k}]~(k = 0, 1, \ldots, L-1)$.
Therefore, we conclude 
$$
\sigma(H) = \bigcup_{k=0}^{L-1}[\gamma_{k}, \beta_{k}].
$$ 
Thus, we complete the proof.
\end{proof}

By carefully observing the proof of Theorem \ref{thm:realstructure} 
and the value at $E = \pm 2$ of the discriminant $\Delta_L(E)$, 
the following result is easily derived.
\begin{cor}\label{cor:bandlocation} 
Suppose that $\sigma(H) = \bigcup_{k = 0}^{L-1}I_{k}$, where $I_{k}=[\gamma_{k}, \beta_{k}]$ with $\gamma_{L-1}<\beta_{L-1}<\cdots<\gamma_0<\beta_0$. Then, $[-2, 2]\cap[\gamma_0, \beta_0]\ne\emptyset$ if and only if $0\leq v\leq 4/L$, and $[-2, 2]\cap[\gamma_{L-1}, \beta_{L-1}]\ne\emptyset$ if and only if $-4/L\leq v\leq 0$.
\end{cor}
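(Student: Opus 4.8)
The plan is to read off the claim directly from the monotonicity structure established in the proof of Theorem~\ref{thm:realstructure}, supplemented by the explicit values $\Delta_L(\pm 2)$ computed in Corollary~\ref{cor:zeros}. Recall that the bands $I_k = [\gamma_k,\beta_k]$ are ordered as $\gamma_{L-1}<\beta_{L-1}<\cdots<\gamma_0<\beta_0$, that $\beta_0$ is the largest zero of $\Delta_L(E)-2$ with $\Delta_L(\beta_0)=2$, and that $\gamma_0\in(\beta_1,\beta_0)$ with $\Delta_L(\gamma_0)=-2$. Since $\Delta_L$ is monic of degree $L$, it is strictly increasing on $[\beta_0,\infty)$ (it has no critical points beyond $\beta_0$, else there would be a further band-edge), and strictly monotone on $[\gamma_0,\beta_0]$ with $\Delta_L$ mapping $[\gamma_0,\beta_0]$ onto $[-2,2]$ (in the case considered, $v>2$ and $L$ odd, increasingly; the other parities are symmetric).

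First I would handle the right-hand endpoint. The intersection $[-2,2]\cap[\gamma_0,\beta_0]$ is nonempty precisely when $2\in[\gamma_0,\beta_0]$, because $\beta_0>2$ always (as $\Delta_L(2)=2-vL<2$ for $v>0$, while $\Delta_L(\beta_0)=2$ and $\Delta_L$ is increasing past the last band edge). Now $2\ge\gamma_0$ is equivalent to $\Delta_L(2)\le 2$ when $2\le\beta_0$ and $\Delta_L$ increasing on $[\gamma_0,\beta_0]$—more carefully, $2\in[\gamma_0,\beta_0]$ iff $\Delta_L(2)\in[-2,2]$ AND $2\ge\gamma_0$; but since $\Delta_L(2)=2-vL\le 2$ automatically for $v\ge 0$, and on $(-\infty,\gamma_0]\cap(\beta_1,\beta_0)$ we have $\Delta_L>2$ would be false... the cleanest route is: $2\in\sigma(H)\cap[\gamma_0,\beta_0]$ iff $\Delta_L(2)=2-vL\in[-2,2]$, which by Corollary~\ref{cor:zeros} is equivalent to $0\le v\le 4/L$, and one then checks $2$ lands in the \emph{last} band $[\gamma_0,\beta_0]$ rather than an earlier one, which follows because $\beta_0>2$ and the next edge down is $\beta_1<2$ (indeed $\beta_1=2\cos(\pi/L)<2$). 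So the whole content is: $[-2,2]\cap[\gamma_0,\beta_0]\ne\emptyset \iff 2\in[\gamma_0,\beta_0] \iff \Delta_L(2)\in[-2,2] \iff 0\le v\le 4/L$.

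The left-hand endpoint is symmetric. Using $\Delta_L(-E)$ parity from Lemma~\ref{lem:Taylor} (or directly $\Delta_L(-2)=(-1)^L(2+vL)$ from Corollary~\ref{cor:zeros}), the band $[\gamma_{L-1},\beta_{L-1}]$ is the leftmost, with $\gamma_{L-1}<-2$ when $v<0$ (mirroring the argument above), and $-2\in[\gamma_{L-1},\beta_{L-1}]$ iff $\Delta_L(-2)\in[-2,2]$ iff $|2+vL|\le 2$ iff $-4/L\le v\le 0$. I would just say "by the same argument applied to $-E$, using $\Delta_L(-2)=(-1)^L(2+vL)$" and leave it.

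The main obstacle is not any hard estimate but rather the bookkeeping of which band $\pm 2$ can possibly belong to: I need to rule out $2$ sitting in $I_1,\dots,I_{L-1}$ when $0\le v\le 4/L$. This is clinched by the location of the $\beta_k=2\cos(k\pi/L)$: all of $\beta_1,\dots,\beta_{L-1}$ lie strictly in $(-2,2)$, and the band-edge ordering forces $[-2,2]$ to meet only the two outermost bands when it meets any; more precisely $2$ exceeds every $\beta_k$ for $k\ge 1$, so $2$ can only be in $I_0=[\gamma_0,\beta_0]$. A completely rigorous version would note that when $v>4/L$ one has $\Delta_L(2)<-2$, so $2\in(\gamma_0,\gamma_0)$... no: $\Delta_L(2)<-2$ puts $2$ strictly between consecutive bands (in a gap), so $2\notin\sigma(H)$ at all, consistent with Corollary~\ref{cor:zeros}; and when $v<0$, $\Delta_L(2)=2-vL>2$ also puts $2$ in a gap (to the right of $\beta_0$). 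I would write this out in two or three lines for each endpoint and conclude.

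\begin{proof}
By Theorem~\ref{thm:realstructure} we may write $\sigma(H)=\bigcup_{k=0}^{L-1}[\gamma_k,\beta_k]$ with $\gamma_{L-1}<\beta_{L-1}<\cdots<\gamma_0<\beta_0$, where $\beta_0$ is the largest zero of $\Delta_L(E)-2$ and $\gamma_0\in(\beta_1,\beta_0)$ satisfies $\Delta_L(\gamma_0)=-2$. Since $\Delta_L$ is monic of degree $L$ and $\beta_0$ is its largest critical-value preimage among band edges, $\Delta_L$ is strictly increasing on $[\beta_0,\infty)$, and $\Delta_L$ maps $[\gamma_0,\beta_0]$ onto $[-2,2]$.

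We first treat the band $[\gamma_0,\beta_0]$. By Corollary~\ref{cor:zeros}, $\Delta_L(2)=2-vL$. Since $\beta_1=2\cos(\pi/L)<2$ and $\gamma_0\in(\beta_1,\beta_0)$, the point $2$ cannot lie in any band $[\gamma_k,\beta_k]$ with $k\ge 1$, as all such bands are contained in $(-\infty,\beta_1]\subset(-\infty,2)$. Hence $[-2,2]\cap[\gamma_0,\beta_0]\ne\emptyset$ if and only if $2\in[\gamma_0,\beta_0]$, i.e. if and only if $2\in\sigma(H)$. By Corollary~\ref{cor:zeros}, $2\in\sigma(H)$ if and only if $\Delta_L(2)=2-vL\in[-2,2]$, which is equivalent to $0\le v\le 4/L$. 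This proves the first equivalence.

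For the band $[\gamma_{L-1},\beta_{L-1}]$, apply the same reasoning to $-E$, using $\Delta_L(-2)=(-1)^L(2+vL)$ from Corollary~\ref{cor:zeros}. Since $\beta_{L-1}=2\cos((L-1)\pi/L)>-2$ and all bands $[\gamma_k,\beta_k]$ with $k\le L-2$ lie in $[\beta_{L-1},\infty)\subset(-2,\infty)$, the point $-2$ can lie only in the leftmost band $[\gamma_{L-1},\beta_{L-1}]$. Thus $[-2,2]\cap[\gamma_{L-1},\beta_{L-1}]\ne\emptyset$ if and only if $-2\in\sigma(H)$, which by Corollary~\ref{cor:zeros} holds if and only if $|2+vL|\le 2$, i.e. $-4/L\le v\le 0$. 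This completes the proof.
\end{proof}
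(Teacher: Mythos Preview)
Your approach coincides with the paper's (which offers no proof beyond the remark that it follows from the proof of Theorem~\ref{thm:realstructure} together with the values $\Delta_L(\pm2)$), and your write-up is considerably more detailed. There is, however, a real gap in your central reduction. You assert
\[
[-2,2]\cap[\gamma_0,\beta_0]\ne\emptyset \iff 2\in[\gamma_0,\beta_0],
\]
but the forward implication needs $\beta_0\ge2$: since you have already shown $\gamma_0>\beta_1>-2$, if instead $\beta_0<2$ then $[\gamma_0,\beta_0]\subset(-2,2)$, so the intersection is automatically nonempty while $2\notin[\gamma_0,\beta_0]$. In your preamble you did argue $\beta_0>2$ from $\Delta_L(2)=2-vL<2$ together with monotonicity past the last band edge, but that computation uses $v>0$, and you dropped it from the formal proof. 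In fact for $v<0$ one has $\beta_0<2$: writing $E=2\cosh\xi>2$ gives $\Delta_L(E)=2\cosh L\xi - v\,\dfrac{\sinh L\xi}{\sinh\xi}>2$, so no point to the right of $2$ lies in $\sigma(H)$. Your identification $\beta_1=2\cos(\pi/L)$ also comes from the $v>0$ case of Theorem~\ref{thm:realstructure}; for $v<0$ the Chebyshev zeros are \emph{left} band edges, so $2\cos(\pi/L)$ equals $\gamma_0$, not $\beta_1$.

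This actually exposes a defect in the corollary as literally stated. For any $v<0$ the rightmost band $I_0$ is contained in $(-2,2)$, so $[-2,2]\cap I_0\ne\emptyset$, contradicting ``only if $0\le v\le 4/L$''. Symmetrically, for $v>0$ the proof of Theorem~\ref{thm:realstructure} gives $\gamma_{L-1}\in(-2,\beta_{L-1})$, so $I_{L-1}\subset(-2,2)$ and the second biconditional fails. The intended content, consistent with the remark following the corollary, is: for $v>0$ the only band that can protrude beyond $[-2,2]$ is $I_0$, and it still meets $[-2,2]$ iff $v\le 4/L$; and symmetrically for $v<0$ with $I_{L-1}$. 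Once you (i) make the sign hypothesis on $v$ explicit in each half, and (ii) reinstate in the formal proof the line showing $\beta_0>2$ for $v>0$ (resp.\ $\gamma_{L-1}<-2$ for $v<0$), your argument goes through.
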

We remark that Corollary \ref{cor:bandlocation} implies that for $|v|>4/L$, $[-2, 2]$ contains exactly $L-1$ spectral bands, and there exists only one band outside $[-2, 2]$.
If $L$ is large enough, $|v|>4/L$ is satisfied; moreover, we have a good approximation of the location of the spectral band outside $[-2, 2]$ as follows.
\begin{cor}
The spectral band $I_{*}$ outside $[-2, 2]$ closes a point 
either $\sqrt{4+v^2}$ for $v>0$, or $-\sqrt{4+v^2}$ as $L\to\infty$.
That is, 
$$
\sup_{E\in I_{*}}\dist(E, \{\sgn(v)\sqrt{4+v^2}\})\to 0\quad (L\to\infty),
$$ 
where $\sgn$ is the signum function.
\end{cor}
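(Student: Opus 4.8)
The plan is to show that the unique band $I_*$ lying outside $[-2,2]$ shrinks to the point $\sgn(v)\sqrt{4+v^2}$ by locating, for each large $L$, both endpoints $\gamma$ and $\beta$ of $I_*$ and showing they both converge to this value. We treat $v>0$ (the case $v<0$ follows by the symmetry $E\mapsto -E$, $v\mapsto -v$ used already in Lemma~\ref{lem:Taylor}); then $I_*=[\gamma_0,\beta_0]$ lies to the right of $2$. The natural coordinate on $(2,\infty)$ is $E=2\cosh\xi$ with $\xi>0$, under which, by the hyperbolic formulas for the Chebyshev polynomials recorded after Definition~\ref{defn:Chebyshev}, the main representation~\eqref{eq:mainrepresentation} becomes
\begin{equation*}
\Delta_L(2\cosh\xi)=2\cosh L\xi-v\,\frac{\sinh L\xi}{\sinh\xi}
=\frac{e^{L\xi}}{\sinh\xi}\Bigl(\sinh\xi-\tfrac{v}{2}\Bigr)+O\bigl(e^{-L\xi}\bigr),
\end{equation*}
where the error is uniform once $\xi$ is bounded below. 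The endpoints of $I_*$ are the two solutions $\xi$ of $\Delta_L(2\cosh\xi)=\pm 2$ adjacent to the bulk, and from the displayed asymptotics the dominant balance forces $\sinh\xi\to v/2$, i.e. $\cosh\xi\to\sqrt{1+v^2/4}$, i.e. $E=2\cosh\xi\to\sqrt{4+v^2}$.

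To turn this into a rigorous squeeze, first I would fix $\delta>0$ small and let $\xi_0=\arccosh\bigl(\tfrac12\sqrt{4+v^2}\bigr)$ be the target, so $\sinh\xi_0=v/2>0$. I would evaluate $\Delta_L(2\cosh\xi)$ at $\xi=\xi_0\pm\delta$: at $\xi_0+\delta$ one has $\sinh\xi-\tfrac v2>0$ fixed, so $\Delta_L(2\cosh(\xi_0+\delta))\to+\infty$ as $L\to\infty$, in particular exceeds $2$; at $\xi_0-\delta$ one has $\sinh\xi-\tfrac v2<0$ fixed, so $\Delta_L(2\cosh(\xi_0-\delta))\to-\infty$, in particular is below $-2$. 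Meanwhile, between $\beta_1=2\cos(\pi/L)$ (where $\Delta_L=\pm2$, already inside the bulk and converging to $2$ from below) and $\xi_0-\delta$, continuity and these sign facts give the endpoint $\gamma_0\in(2,2\cosh(\xi_0-\delta))$ with $\Delta_L(\gamma_0)=\mp2$, and the other endpoint $\beta_0$ lies in $(2\cosh(\xi_0-\delta),2\cosh(\xi_0+\delta))$ once $L$ is large; this uses Corollary~\ref{cor:bandlocation} and the proof of Theorem~\ref{thm:realstructure} to know $I_*=[\gamma_0,\beta_0]$ is the one band outside $[-2,2]$ and that $\gamma_0\ge 2$. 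Hence for $L$ large, $I_*\subset[2,2\cosh(\xi_0+\delta)]$, and I must still push the left endpoint up to $2\cosh(\xi_0-\delta)$.

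For the left endpoint I would argue that $\Delta_L(2\cosh\xi)$ cannot stay in $[-2,2]$ for $\xi$ in a fixed subinterval of $(0,\xi_0-\delta)$ once $L$ is large: there $\abs{\sinh\xi-\tfrac v2}$ is bounded below and $\xi$ is bounded below, so $\abs{\Delta_L(2\cosh\xi)}\ge c\,e^{L\xi}-C\to\infty$ uniformly, forcing $\abs{\Delta_L}>2$. Combined with $\gamma_0\ge2$ this pins $\gamma_0\in(2\cosh(\xi_0-\delta),2\cosh(\xi_0+\delta))$ too — except near $\xi=0$, where $e^{L\xi}$ is not large; but the interval $(2,2\cosh(\xi_0-\delta))$ near $E=2$ is handled because Corollary~\ref{cor:bandlocation} (with $v>4/L$ for large $L$) shows $[-2,2]\cap I_*=\emptyset$, so $\gamma_0>2$ strictly, and the bulk band attached to $\beta_1\to2$ is a different band; a short argument shows no band endpoint of $I_*$ can accumulate at $2$ when $v>0$ is fixed, since $\Delta_L(2)=2-Lv\to-\infty$. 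Letting $\delta\downarrow0$ gives $\sup_{E\in I_*}\dist(E,\{\sqrt{4+v^2}\})\to0$. The main obstacle is the bookkeeping near $E=2$: separating the "outside" band $I_*$ from the rightmost bulk band whose endpoint $\beta_1$ tends to $2$, and ruling out that $\gamma_0$ lingers near $2$ — this is where I would lean hardest on Corollary~\ref{cor:bandlocation}, the ordering $\gamma_{L-1}<\beta_{L-1}<\cdots<\gamma_0<\beta_0$, and the explicit value $\Delta_L(2)=2-Lv$.
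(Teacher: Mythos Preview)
Your plan is correct and shares the paper's essential move: the hyperbolic substitution $E=2\cosh\xi$ on $(2,\infty)$ and the identity $\Delta_L(2\cosh\xi)=2\cosh L\xi-v\,\sinh L\xi/\sinh\xi$, which singles out the target $\sinh\xi_0=v/2$, i.e.\ $E=\sqrt{4+v^2}$. The difference is only in how the limit is extracted. You expand exponentials and run a squeeze on the function $\Delta_L$ at $\xi_0\pm\delta$, then separately exclude the region near $E=2$. The paper instead works directly with the Floquet equation: writing $E(\kappa)=2\cosh\xi(\kappa)$ for each $\kappa\in[0,\pi]$ and dividing $\Delta_L(2\cosh\xi(\kappa))=2\cos\kappa$ through by $\sinh L\xi(\kappa)$ gives
\[
2\coth L\xi(\kappa)-\frac{v}{\sinh\xi(\kappa)}=\frac{2\cos\kappa}{\sinh L\xi(\kappa)},
\]
and letting $L\to\infty$ forces $\sinh\xi(\kappa)\to v/2$, hence $E(\kappa)\to\sqrt{4+v^2}$ for every $\kappa$ simultaneously. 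This is shorter and handles all of $I_*$ in one stroke with no endpoint bookkeeping. On the other hand, your version is more explicit about the behaviour near $E=2$: the paper's one-line limit tacitly assumes $L\xi(\kappa)\to\infty$ (so that $\coth L\xi(\kappa)\to1$ and the right side vanishes), which is precisely what your discussion around $\Delta_L(2)=2-Lv\to-\infty$ and Corollary~\ref{cor:bandlocation} justifies.
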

\begin{proof}
It is sufficient to show that all Floquet eigenvalues in the spectral band outside $[-2, 2]$ 
converge to either $\sqrt{4+v^2}$ for $v>0$, or $-\sqrt{4+v^2}$ for $v<0$ as $L\to\infty$.
By letting $E(\kappa)=2\cosh\xi(\kappa)~(\xi(\kappa)>0)$ be a Floquet eigenvalue larger than $2$, we obtain
\begin{eqnarray*}
\Delta_L(2\cosh\xi(\kappa)) &=& 2T_L(\cosh\xi(\kappa)) - vU_{L-1}(\cosh\xi(\kappa)) \\
&=& 2\cosh L\xi(\kappa) - v\frac{\sinh L\xi(\kappa)}{\sinh\xi(\kappa)} = 2\cos\kappa. 
\end{eqnarray*}
Thus, we have 
\begin{equation}\label{eq:limit}
2\coth L\xi(\kappa) - \frac{v}{\sinh\xi(\kappa)} = \frac{2\cos\kappa}{\sinh L\xi(\kappa)}.
\end{equation}
Taking limit $L\to\infty$ of both sides of (\ref{eq:limit}), we learn that $\sinh\xi(\kappa)\to v/2$ as $L\to\infty$, therefore, 
$E=2\cos\xi(\kappa)\to\sqrt{4+v^2}$ as $L\to\infty$. 
In the case $v<0$, $E\to -\sqrt{4+v^2}$ can be found similarly.
\end{proof}
\section{Floquet spectrum for small and large $v$}
In this section, we study perturbations of the Floquet spectrum for small and large $v$.
We first note that from Proposition \ref{prop:Chebyshevzeros}, all roots of $T_n(x)$ and $U_n(x)$ are simple.
\begin{thm}(Lagrange inversion formula, 3.6.6. in Abramowitz and Stegun\cite{AS})\label{thm:Lagrange}
Suppose $w=f(z)$ is analytic at a point $z_0$ and $f'(z_0)\ne 0$. 
Then $z=g(w)$ given by a power series has a nonzero radius of convergence:
$$
g(w) = z_0 + \sum_{n=1}^{\infty}\frac{g_n}{n!}(w-f(z_0))^n,
$$
where 
$$
g_n = \lim_{z\to a}\frac{d^{n-1}}{dz^{n-1}}\left\{\left(\frac{z-z_0}{f(z)-f(z_0)}\right)^n\right\}.
$$
\end{thm}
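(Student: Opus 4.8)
The plan is to give the classical residue-calculus proof of the Lagrange inversion formula, since Theorem~\ref{thm:Lagrange} is precisely this statement. First I would reduce to the normalized case $z_0=0$, $f(z_0)=0$ by replacing $f$ with $\widetilde f(\zeta)=f(z_0+\zeta)-f(z_0)$ and $z$ with $\zeta=z-z_0$; this affects only the bookkeeping of the two shifts, which I would reinstate at the end. In the normalized situation $f$ is holomorphic near $0$ with $f(0)=0$ and $f'(0)\ne 0$, so by the holomorphic inverse function theorem $f$ restricts to a biholomorphism of a disk $D_r=\{|z|<r\}$ onto an open neighborhood $V$ of $0$, with holomorphic inverse $g\colon V\to D_r$. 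Consequently $g$ has a Taylor expansion $g(w)=\sum_{n\ge 1}a_n w^n$ with $a_0=g(0)=0$ and a positive radius of convergence; this already establishes the ``nonzero radius of convergence'' assertion, and it remains only to identify the coefficients $a_n$.

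Next I would extract $a_n$ by Cauchy's formula, $a_n=\frac{1}{2\pi i}\oint_{|w|=\rho}w^{-n-1}g(w)\,dw$ for small $\rho$, and substitute $w=f(z)$. For $\rho$ small the circle $\{|w|=\rho\}$ pulls back under the biholomorphism to a simple closed contour $C\subset D_r$ encircling $0$ once positively, and one has $g(f(z))=z$, $dw=f'(z)\,dz$, whence
\[
a_n=\frac{1}{2\pi i}\oint_C\frac{z\,f'(z)}{f(z)^{n+1}}\,dz .
\]
I would then write $\dfrac{f'(z)}{f(z)^{n+1}}=-\dfrac1n\dfrac{d}{dz}\bigl(f(z)^{-n}\bigr)$ and integrate by parts; the boundary term vanishes because $C$ is closed, leaving $a_n=\dfrac{1}{2\pi i\,n}\oint_C f(z)^{-n}\,dz=\dfrac1n\operatorname{Res}_{z=0}f(z)^{-n}$.

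To evaluate this residue I would use $f'(0)\ne 0$ to write $f(z)=z\,h(z)$ with $h$ holomorphic near $0$ and $h(0)=f'(0)\ne 0$, so $f(z)^{-n}=z^{-n}h(z)^{-n}$ with $h^{-n}$ holomorphic. The residue is therefore the coefficient of $z^{n-1}$ in $h(z)^{-n}=(z/f(z))^n$, i.e.
\[
a_n=\frac1n\cdot\frac{1}{(n-1)!}\left.\frac{d^{n-1}}{dz^{n-1}}\left(\frac{z}{f(z)}\right)^{n}\right|_{z=0}=\frac{1}{n!}\left.\frac{d^{n-1}}{dz^{n-1}}\left(\frac{z}{f(z)}\right)^{n}\right|_{z=0}.
\]
Restoring the shifts replaces $z$ by $z-z_0$, $f(z)$ by $f(z)-f(z_0)$, the evaluation at $0$ by the limit $z\to z_0$, and adds back the constant term $z_0=g(f(z_0))$; this is exactly the claimed expansion with $g_n/n!$ as the coefficients.

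The main obstacle is not any isolated computation but the care required in the contour substitution: one must check that for sufficiently small $\rho$ the preimage $f^{-1}(\{|w|=\rho\})$ is a single Jordan curve with winding number $1$ about $0$, so that the change of variables $w=f(z)$ in the Cauchy integral is legitimate and the residue at $z=0$ is counted once with the correct sign. This is a direct consequence of $f$ being a biholomorphism of $D_r$ onto $V$, which is the one external input (the holomorphic inverse function theorem), also responsible for the analyticity of $g$ and hence for the nonzero radius of convergence. Everything after that reduction is routine manipulation of residues and power series.
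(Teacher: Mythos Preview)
Your proof is correct and is the standard residue-calculus derivation of the Lagrange inversion formula. However, there is nothing to compare it to: in the paper this theorem is not proved at all but merely quoted from Abramowitz--Stegun~\cite{AS} as a known tool, to be applied in the subsequent Lemma~\ref{lem:Taylorapprox}. (Incidentally, the limit ``$z\to a$'' in the displayed formula of the statement is evidently a typo for ``$z\to z_0$'', as you implicitly read it.) So your contribution is a self-contained proof where the paper simply cites the literature; that is fine, and your argument needs no repair.
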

\begin{lem}\label{lem:Taylorapprox}
Let $f(z)$ be a polynomial with $\alpha$ as a simple root, and let $g(z)$ be a polynomial satisfying $g(\alpha)\ne 0$.    
Then, for $t \in\mathbb{C}$ with sufficiently small $|t|$, $f(z) - sg(z) - t$ has a simple root 
$\alpha(v, t)$ near $\alpha$ such that
$$
\alpha(v, t) = \alpha + \frac{g(\alpha)}{f'(\alpha)}s + \frac{t}{f'(\alpha)} + O((|s|+|t|)^2)
$$
\end{lem}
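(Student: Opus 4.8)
The plan is to view $F(z,s,t) = f(z) - sg(z) - t$ as a two-parameter analytic deformation of $f$ and apply the Lagrange inversion formula (Theorem \ref{thm:Lagrange}) to invert the map $z \mapsto f(z)$ locally near $\alpha$. Since $\alpha$ is a simple root of $f$, we have $f'(\alpha) \ne 0$, so $w = f(z)$ is locally invertible at $\alpha$; write $z = \phi(w)$ for the analytic inverse with $\phi(0) = \alpha$, valid on a neighborhood of $w = 0$ (here I normalize $f(\alpha) = 0$, which we may do by replacing $f$ with $f - f(\alpha)$; note $f(\alpha) = 0$ is exactly the hypothesis that $\alpha$ is a root). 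From Theorem \ref{thm:Lagrange}, the first Taylor coefficient is $\phi'(0) = 1/f'(\alpha)$, so $\phi(w) = \alpha + w/f'(\alpha) + O(w^2)$.

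Next I would rewrite the equation $F(z,s,t) = 0$ as $f(z) = sg(z) + t$, i.e. $z = \phi\bigl(sg(z) + t\bigr)$, a fixed-point equation for $z$ as a function of the small parameters $(s,t)$. For $|s| + |t|$ small this is a contraction on a small closed disc around $\alpha$ (the right-hand side has derivative in $z$ equal to $\phi'(sg(z)+t) \cdot s g'(z) = O(|s|)$), so it has a unique fixed point $\alpha(s,t)$ near $\alpha$ depending analytically on $(s,t)$, and this is the asserted simple root of $F$ (simplicity follows because $\partial_z F(\alpha(s,t),s,t) = f'(\alpha(s,t)) - s g'(\alpha(s,t)) \to f'(\alpha) \ne 0$). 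To extract the linear term, substitute the zeroth-order approximation $\alpha(s,t) = \alpha + O(|s|+|t|)$ into the fixed-point equation: $\alpha(s,t) = \phi\bigl(s g(\alpha + O(|s|+|t|)) + t\bigr) = \phi\bigl(s g(\alpha) + t + O((|s|+|t|)^2)\bigr)$, and then expand $\phi$ to first order using $\phi'(0) = 1/f'(\alpha)$ to get
$$
\alpha(s,t) = \alpha + \frac{sg(\alpha) + t}{f'(\alpha)} + O\bigl((|s|+|t|)^2\bigr),
$$
which is the claimed formula. The role of the hypothesis $g(\alpha) \ne 0$ is only to guarantee that the $s$-coefficient is genuinely nonzero (the expansion itself goes through regardless); I would remark on this or simply note it is not needed for the stated asymptotic.

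The only delicate point is making the bookkeeping of the two error terms clean: one must check that substituting an $O(|s|+|t|)$ perturbation of $\alpha$ into $g$ and then into $\phi$ produces only an $O((|s|+|t|)^2)$ correction to the linear term, which is immediate since $g$ and $\phi$ are analytic (hence Lipschitz on compact neighborhoods) and the perturbation enters multiplied by $s$ in the $g$-term. This is routine, so I do not expect a genuine obstacle; the main thing to be careful about is the normalization $f(\alpha) = 0$ versus an arbitrary value, and uniformity of the contraction constant over $(s,t)$ in a fixed small polydisc, both of which are standard.
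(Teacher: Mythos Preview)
Your argument is correct, and it differs from the paper's in a way worth noting. The paper applies the Lagrange inversion formula \emph{twice}: first to $h(z)=f(z)-sg(z)$ with respect to the parameter $t$, obtaining $\alpha(s,t)=\alpha(s)+t/\bigl(f'(\alpha(s))-sg'(\alpha(s))\bigr)+O(|t|^2)$, and then a second time to the equation $f(z)/g(z)=s$ with respect to $s$, obtaining $\alpha(s)=\alpha+g(\alpha)s/f'(\alpha)+O(|s|^2)$; it then substitutes and expands the denominator as a geometric series in $s$. Your route inverts $f$ only once, rewrites the root equation as the fixed point $z=\phi(sg(z)+t)$, and bootstraps from the zeroth-order approximation. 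This treats the two parameters $s$ and $t$ symmetrically and avoids the auxiliary function $f/g$. In particular, the paper's second application of Lagrange inversion requires $f/g$ to be analytic at $\alpha$, which is exactly where the hypothesis $g(\alpha)\ne 0$ enters; in your approach that hypothesis is indeed superfluous for the expansion itself, as you observe. Both approaches are elementary, but yours is slightly more economical and makes clearer why the two linear contributions $g(\alpha)s/f'(\alpha)$ and $t/f'(\alpha)$ appear on equal footing.
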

\noindent
{\it Sketch of the proof of Lemma \ref{lem:Taylorapprox} }

\noindent 
First, we note that $f'(\alpha)\ne 0$ since $\alpha$ is a simple root of $f(z)$. 
We consider the polynomial $h(z) = f(z)-sg(z)-t$. 
If $|s|$ and $|t|$ are small enough, $h(z)$ has a simple root $\alpha(s, t)$ near $\alpha$. 
Theorem \ref{thm:Lagrange} shows
\begin{equation}\label{eq:alphavt}
\alpha(s, t) = \alpha(s) + \frac{t}{f'(\alpha(s))-sg'(\alpha(s))} + O(|t|^2),
\end{equation}
where $\alpha(s)$ is a root of $f(z)-sg(z)=0$. 
$\alpha(s)$ is close enough to $\alpha$ for every sufficiently small $|s|$. 
Since $f(z)/g(z)$ is analytic at $\alpha$  the assumption, we can use Theorem \ref{thm:Lagrange} again, it shows
\begin{eqnarray}
\alpha(s) &=& \alpha + \frac{s}{\left.\left(\frac{f(z)}{g(z)}\right)'\right|_{z=\alpha}} + O(|s|^2) \nonumber \\
&=& \alpha + \frac{g(\alpha)}{f'(\alpha)}s + O(|s|^2) \label{eq:alphav}
\end{eqnarray}

Substituting (\ref{eq:alphav}) into (\ref{eq:alphavt}) and expanding the first-order term for $t$ using the geometric series for small $|s|$, we have
\begin{eqnarray*}
\alpha(s, t) &=& \alpha + \frac{g(\alpha)}{f'(\alpha)}s + \frac{t}{f'(\alpha(s))-sg'(\alpha(s))} + O(|s|^2) + O(|t|^2) \\
&=& \alpha + \frac{g(\alpha)}{f'(\alpha)}s + \frac{t}{f'(\alpha(s))}\sum_{n=0}^{\infty}\left(\frac{g'(\alpha(s))}{f'(\alpha(s))}\right)^ns^n + O(|s|^2) + O(|t|^2) \\
&=& \alpha + \frac{g(\alpha)}{f'(\alpha)}s + \frac{t}{f'(\alpha(s))} + O(|st|) + O(|s|^2) + O(|t|^2)
\end{eqnarray*}
\noindent
Since it follows from (\ref{eq:alphav}) that $f'(\alpha(s)) =f'(\alpha) + O(|s|)$, we have
$$
\frac{1}{f'(\alpha(s))} = \frac{1}{f'(\alpha)} + O(|s|).
$$
Summarize the above to reach the desired result. 
\qed

\begin{thm}\label{thm:localstructure}
Let $\alpha_j$ be defined in Lemma \ref{lem:Taylor}, and $\alpha_j(v, \kappa)$ be the element nearest $\alpha_j$ of the Floquet spectrum $\sigma_{\kappa}(H)$. For $v\in\mathbb{C}$ with sufficiently small $|v|$ and $\kappa\in [0, \pi]$ with sufficiently small $|\cos\kappa|$, 
$$
\alpha_j(v, \kappa) = \alpha_j + \frac{v}{L} + 2(-1)^{j-1}\frac{\sin\frac{(2j-1)\pi}{2L}}{L}\cos\kappa 
+ O((|v|+|\cos\kappa|)^2).
$$
\end{thm}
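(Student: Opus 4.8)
The plan is to read off $\alpha_j(v,\kappa)$ as the root near $\alpha_j$ of the polynomial $\Delta_L(E)-2\cos\kappa$ and to feed this into Lemma~\ref{lem:Taylorapprox}. By Theorem~\ref{thm:represent},
\[
\Delta_L(E)-2\cos\kappa = 2T_L(E/2)-vU_{L-1}(E/2)-2\cos\kappa,
\]
so I would take $f(z)=2T_L(z/2)$, $g(z)=U_{L-1}(z/2)$, $s=v$, and $t=2\cos\kappa$; then $f(z)-sg(z)-t$ is precisely $\Delta_L(z)-2\cos\kappa$, and both $f$ and $g$ are polynomials, as the lemma requires. The smallness hypothesis $|t|$ small is exactly $|\cos\kappa|$ small.

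Next I would verify the hypotheses of Lemma~\ref{lem:Taylorapprox} at $\alpha=\alpha_j=2\cos\theta_j$, $\theta_j=\frac{(2j-1)\pi}{2L}$. Since $T_L(\cos\theta_j)=\cos\frac{(2j-1)\pi}{2}=0$, the point $\alpha_j$ is a root of $f$, and it is simple because all zeros of $T_L$ are simple (Proposition~\ref{prop:Chebyshevzeros}, as recalled at the start of this section). Also $g(\alpha_j)=U_{L-1}(\cos\theta_j)=\sin(L\theta_j)/\sin\theta_j=(-1)^{j-1}/\sin\theta_j\neq 0$, using $L\theta_j=\frac{(2j-1)\pi}{2}$ and $\sin\theta_j>0$. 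To see that the root produced by the lemma is genuinely ``the element of $\sigma_\kappa(H)$ nearest $\alpha_j$,'' I would invoke continuity of roots: at $v=0$, $\cos\kappa=0$ the polynomial $\Delta_L(E)-2\cos\kappa=2T_L(E/2)$ has the $L$ distinct simple roots $\alpha_1,\dots,\alpha_L$, so for $|v|$ and $|\cos\kappa|$ small there is exactly one root of $\Delta_L(E)-2\cos\kappa$ in a fixed small neighbourhood of each $\alpha_j$, and this is $\alpha_j(v,\kappa)$; it coincides with the root $\alpha(s,t)$ of the lemma.

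Then Lemma~\ref{lem:Taylorapprox}, with $s=v$ and $t=2\cos\kappa$, gives
\[
\alpha_j(v,\kappa)=\alpha_j+\frac{g(\alpha_j)}{f'(\alpha_j)}\,v+\frac{2\cos\kappa}{f'(\alpha_j)}+O\big((|v|+|\cos\kappa|)^2\big),
\]
and it remains to evaluate $f'(\alpha_j)$. Since $f'(z)=\frac{d}{dz}2T_L(z/2)=T_L'(z/2)=L\,U_{L-1}(z/2)$ by Proposition~\ref{prop:identity-diff}, we get $f'(\alpha_j)=L\,U_{L-1}(\cos\theta_j)=L\,g(\alpha_j)$. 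Hence $g(\alpha_j)/f'(\alpha_j)=1/L$, which yields the term $v/L$; and $2\cos\kappa/f'(\alpha_j)=2\cos\kappa\big/\big(L(-1)^{j-1}/\sin\theta_j\big)=\frac{2(-1)^{j-1}\sin\theta_j}{L}\cos\kappa$, which is the claimed $\cos\kappa$-term once $\theta_j=\frac{(2j-1)\pi}{2L}$ is substituted. Assembling these pieces gives the stated formula. I do not expect a serious obstacle here: the argument is essentially bookkeeping once Lemma~\ref{lem:Taylorapprox} is available, the only points needing care being the continuity-of-roots remark that identifies $\alpha_j(v,\kappa)$ with the lemma's root, and keeping the sign $\sin(L\theta_j)=(-1)^{j-1}$ straight when simplifying $U_{L-1}(\cos\theta_j)$.
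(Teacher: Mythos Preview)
Your proposal is correct and follows essentially the same approach as the paper: apply Lemma~\ref{lem:Taylorapprox} to $2T_L(E/2)-vU_{L-1}(E/2)-2\cos\kappa=0$ with $s=v$, $t=2\cos\kappa$, then simplify using $T_L'(\alpha_j/2)=LU_{L-1}(\alpha_j/2)=(-1)^{j-1}L/\sin\theta_j$. Your write-up is in fact a bit more careful than the paper's, since you explicitly verify the hypotheses of the lemma (simplicity of $\alpha_j$, $g(\alpha_j)\neq 0$) and justify via continuity of roots that the lemma's output is indeed the nearest Floquet eigenvalue.
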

\begin{proof}
Consider the equation $\Delta_L(E) - 2\cos\kappa = 2T_L(E/2) - vU_{L-1}(E/2) - 2\cos\kappa = 0$.
By applying Lemma \ref{lem:Taylorapprox} to this equation, we have
\begin{equation}\label{eq:alphaperturb}
\alpha_j(v, \kappa) = \alpha_j + \frac{U_{L-1}(\alpha_j/2)}{T_L'(\alpha_j/2)}v + \frac{2\cos\kappa}{T_L'(\alpha_j/2)} + O(|v|^2+|\cos\kappa|^2).
\end{equation}
Proposition \ref{prop:identity-diff} leads 
$$
T_L'(\alpha_j/2)=LU_{L-1}(\alpha_j/2)=\frac{(-1)^{j-1}L}{\sin\frac{(2j-1)\pi}{2L}}.
$$ 
Substituting the above into (\ref{eq:alphaperturb}) completes the proof.
\end{proof} %
Theorem \ref{thm:localstructure} implies that $\sigma(H)$ has $L$ arcs approximately parallel to the real axis near $\alpha_j+v/L$ for $j = 1, 2, \ldots, L$.
Moreover, as $L$ increases, the potential becomes sparse, therefore, 
the spectrum close to the set $[-2, 2]$ on the real line.

\begin{figure}[htbp]
 \centering
  \includegraphics[width=100mm]{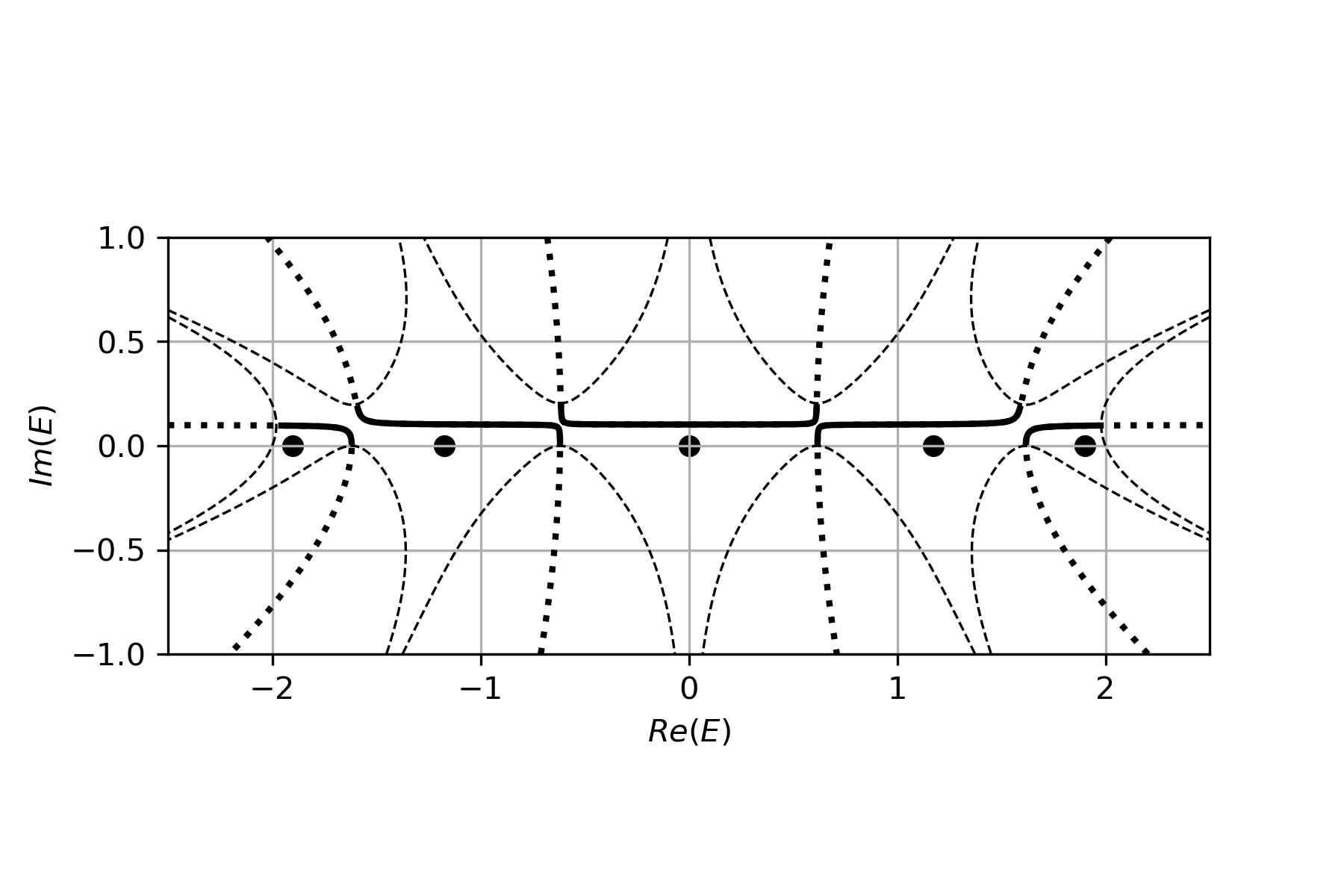}
 \caption{The spectrum for $L=5$ and $v=i/2$}
 \label{fig:L5a05}
\end{figure}
\begin{example}\label{ex:L5a05}
Figure \ref{fig:L5a05} shows the spectrum for $L=5$, $v=i/2$. 
In Figure \ref{fig:L5a05}, the dashed lines are the curves represented by $\Re\Delta_5(E) = \Re(2T_5(E/2)) + \frac{1}{2}\Im U_4(E/2)=\pm 2$ and the dotted lines are the curves represented by $\Im\Delta_5(E) = \Im(2T_5(E/2)) - \frac{1}{2}\Re U_4(E/2)=0$, where 
\begin{eqnarray*}
\Re(2T_5(E/2)) &=& x^5 - 10x^3y^2 - 5x^3 + 5xy^4 + 15xy^2 + 5x \\ 
\Im(2T_5(E/2)) &=& 5x^4y - 10x^2y^3 - 15x^2y + y^5 + 5y^3 + 5y \\
\Re U_4(E/2) &=& x^4 - 6x^2y^2 - 3x^2 + y^4 + 3y^2 + 1 \\
\Im U_4(E/2) &=& 4x^3y - 4xy^3 - 6xy 
\end{eqnarray*}
for $E=x+iy~((x, y)\in\mathbb{R}^2)$.
The solid lines are the spectrum (spectral arcs), and the filled circles on the real axis are $\alpha_1, \alpha_2, \ldots,\alpha_5$ in order from right to left. 
The figure shows that the spectral arcs are almost parallel to the real axis in the neighborhood of each $\alpha_j+v/L=\alpha_j + i/10(j=1, 2, \ldots, 5)$.
\end{example}

\medskip
We next consider the case of large $|v|$. 
\begin{thm}\label{thm:localstructure2}
Let $\beta_k$ be defined in Lemma \ref{lem:Taylor}, and $\beta_k(v, \kappa)$ be the element nearest $\beta_k$ of the Floquet spectrum $\sigma_{\kappa}(H)$.
For $v$ with sufficiently large $|v|$ and $\kappa\in [0, \pi]$ with sufficiently small $|\cos\kappa/v|$, 
$$
\beta_{k}(v, t) = \beta_k - \frac{2\sin^2\frac{k\pi}{L}}{L}\frac{1}{v} -(-1)^k\frac{2\sin^2\frac{k\pi}{L}}{L}\frac{\cos\kappa}{v} + O((|1/v|+|\cos\kappa/v|)^2),  
$$
where $\beta_k$ is defined in Lemma \ref{lem:Taylor}.
\end{thm}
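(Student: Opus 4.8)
The plan is to run the same perturbative scheme as in the proof of Theorem~\ref{thm:localstructure}, but now organized around the equation $U_{L-1}(E/2)=0$ instead of $T_L(E/2)=0$. This is the natural reorganization here: by Theorem~\ref{thm:represent}, $\Delta_L(E)=2T_L(E/2)-vU_{L-1}(E/2)$, so for large $|v|$ the term $-vU_{L-1}(E/2)$ dominates, and the points $\beta_k=2\cos\frac{k\pi}{L}$ are precisely the zeros of $U_{L-1}(E/2)$ (Corollary~\ref{cor:zeros}); hence the Floquet eigenvalues that stay near $[-2,2]$ cluster at the $\beta_k$, with exactly one near each $\beta_k$ once $|v|$ is large.

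First I would start from the Floquet equation $\Delta_L(E)-2\cos\kappa=2T_L(E/2)-vU_{L-1}(E/2)-2\cos\kappa=0$ and divide by $-v$, rewriting it as $U_{L-1}(E/2)-\frac{2}{v}T_L(E/2)+\frac{2\cos\kappa}{v}=0$, i.e.\ in the form $f(E)-s\,g(E)-t=0$ with $f(E)=U_{L-1}(E/2)$, $g(E)=T_L(E/2)$, $s=2/v$ and $t=-2\cos\kappa/v$; the hypotheses ``$|v|$ large'' and ``$|\cos\kappa/v|$ small'' say exactly that $|s|$ and $|t|$ are small. By Proposition~\ref{prop:Chebyshevzeros} the zeros of $U_{L-1}$ are simple, so $\beta_k$ is a simple root of $f$, and $g(\beta_k)=T_L(\cos\frac{k\pi}{L})=\cos k\pi=(-1)^k\neq 0$; thus the hypotheses of Lemma~\ref{lem:Taylorapprox} hold, and it gives
$$\beta_k(v,\kappa)=\beta_k+\frac{g(\beta_k)}{f'(\beta_k)}\cdot\frac{2}{v}-\frac{1}{f'(\beta_k)}\cdot\frac{2\cos\kappa}{v}+O\bigl((|1/v|+|\cos\kappa/v|)^2\bigr).$$

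Next I would evaluate $f'(\beta_k)$: since $f'(E)=\tfrac{1}{2}U_{L-1}'(E/2)$, I would apply the second formula of Proposition~\ref{prop:identity-diff}, $U_{L-1}'(x)=\dfrac{LT_L(x)-xU_{L-1}(x)}{x^2-1}$, at $x=\beta_k/2=\cos\frac{k\pi}{L}$, where $U_{L-1}(\cos\frac{k\pi}{L})=0$, $T_L(\cos\frac{k\pi}{L})=(-1)^k$ and $\cos^2\frac{k\pi}{L}-1=-\sin^2\frac{k\pi}{L}$; this presents $f'(\beta_k)$ as an explicit multiple of $(-1)^kL/\sin^2\frac{k\pi}{L}$. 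Substituting this and $g(\beta_k)=(-1)^k$ into the expansion above and collecting the $1/v$ and $\cos\kappa/v$ contributions yields the stated first-order Taylor polynomial for $\beta_k(v,\kappa)$; since $|\cos\kappa|\le 1$, the cross term $O(|\cos\kappa|/|v|^2)$ coming out of Lemma~\ref{lem:Taylorapprox} is absorbed into the stated remainder.

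The main obstacle is only careful bookkeeping: the factor $\tfrac{1}{2}$ that enters on differentiating in $E$ rather than in $x=E/2$, the signs carried by $(-1)^k$ and by $\cos^2\frac{k\pi}{L}-1<0$, and confirming that the remainder produced by Lemma~\ref{lem:Taylorapprox} genuinely has the claimed order throughout the joint regime where $1/v$ and $\cos\kappa/v$ are both small. Checking the hypotheses of Lemma~\ref{lem:Taylorapprox} and performing the final algebraic simplification is routine, and the coefficients can be sanity-checked on $L=2$, where $\Delta_2(E)-2\cos\kappa=E^2-vE-2-2\cos\kappa$ and the root near $\beta_1=0$ follows directly from the quadratic formula.
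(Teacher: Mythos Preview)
Your approach is essentially identical to the paper's: divide the Floquet equation by $-v$ to get $U_{L-1}(E/2)-\tfrac{2}{v}T_L(E/2)+\tfrac{2\cos\kappa}{v}=0$, apply Lemma~\ref{lem:Taylorapprox} with $f(E)=U_{L-1}(E/2)$, $g(E)=T_L(E/2)$, $s=2/v$, $t=-2\cos\kappa/v$, and evaluate $U_{L-1}'(\beta_k/2)=-L(-1)^k/\sin^2\tfrac{k\pi}{L}$ via Proposition~\ref{prop:identity-diff}. Your attention to the chain-rule factor $\tfrac12$ from $f'(E)=\tfrac12 U_{L-1}'(E/2)$ is well placed: carrying it through (or doing your $L=2$ sanity check, where the root near $\beta_1=0$ is $-(2+2\cos\kappa)/v+O(1/v^2)$) yields first-order coefficients $-\tfrac{4\sin^2(k\pi/L)}{Lv}$ and $(-1)^k\tfrac{4\sin^2(k\pi/L)}{Lv}\cos\kappa$, so the paper's displayed constants appear to have dropped exactly this factor of $2$ (and a sign on the $\cos\kappa$ term).
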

\begin{proof} The equation $\Delta_L(E) - 2\cos\kappa = 0$ can be rewritten as
\begin{equation}
U_{L-1}(E/2) - \frac{2}{v}T_L(E/2) + \frac{2\cos\kappa}{v} = 0. 
\end{equation}
We recall 
$$
T_L(\beta_k/2)=(-1)^k, \quad U_{L-1}(\beta_k/2)=-\frac{L(-1)^k}{\sin^2\frac{k\pi}{L}},
$$
and Lemma \ref{lem:Taylorapprox}, hence we have
\begin{eqnarray*}
\beta_k(v, \kappa) &=& \beta_k + \frac{2T_{L}(\beta_k/2)}{U_{L-1}'(\beta_k/2)}\frac{1}{v} - \frac{2\cos\kappa}{U_{L-1}'(\beta_k/2)}\frac{1}{v} + O((|1/v|+|\cos\kappa/v|)^2) \\
&=&  \beta_k - \frac{2\sin^2\frac{k\pi}{L}}{L}\frac{1}{v} -(-1)^k\frac{2\sin^2\frac{k\pi}{L}}{L}\frac{\cos\kappa}{v} + O((|1/v|+|\cos\kappa/v|)^2). 
\end{eqnarray*}
\end{proof}

\begin{figure}[htbp]
 \centering
  \includegraphics[width=100mm]{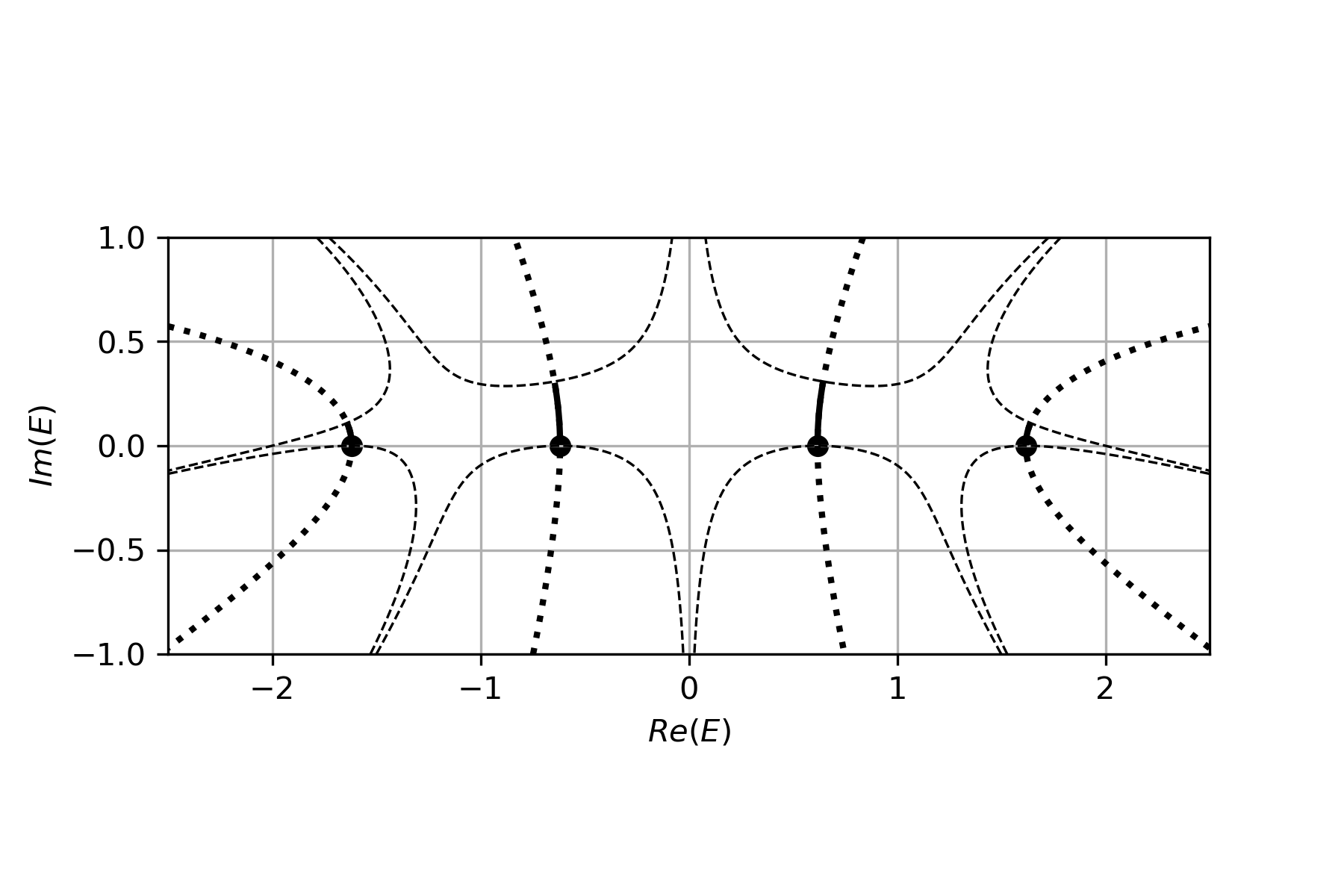}
 \caption{The spectrum for $L=5$ and $v=5i$}
 \label{fig:L5a5}
\end{figure}

\begin{example}
Figure \ref{fig:L5a5} shows the spectrum for $L=5$, $v=5i$. 
The discriminant is the same as in Example \ref{ex:L5a05} except for the value of $v$.
In Figure \ref{fig:L5a5}, the dashed lines are the curves represented by $\Re\Delta_5(E) = \Re(2T_5(E/2)) + 5\Im U_4(E/2)=\pm 2$ 
and the dotted lines are the curves represented by $\Im\Delta_5(E) = \Im(2T_5(E/2)) - 5\Re U_4(E/2)=0$ for $E=x+iy~((x, y)\in\mathbb{R}^2)$. 
The solid lines are the spectral arcs, and the filled circles on the real axis are $\beta_1, \beta_2, \beta_3$, and $\beta_4$ in order from right to left. 
Four curves extend from $\beta_1$, $\beta_2$, $\beta_3$, and $\beta_4$ almost in the direction of the imaginary axis, i.e., in the direction of $-1/v$. 
In the case of small $v$, five connected components of the spectrum exist, as in Figure \ref{fig:L5a05}, but as $v$ increases, there are four.
\end{example}

\medskip
\noindent
{\small\bf Acknowledgements} \\ 
The author would like to thank Dr. Yu Morishima for helpful suggestions on drawing figures in Python. 

\small

\begin{flushright}
Department of Information Technology \\
Faculty of Engineering \\
Tohoku Gakuin University \\
1-13-1 Chuo, Tagajo, Miyagi 985-0873, Japan 
\end{flushright}

\end{document}